\documentclass[12pt,a4paper]{article}

\setlength\arraycolsep{2pt}

\usepackage[margin=1in]{geometry} 
\usepackage{amsfonts,amscd,amssymb, mathtools,mathrsfs, dsfont}
\usepackage[amsmath,amsthm,thmmarks]{ntheorem} 
\usepackage{graphicx,xypic,color,float} 
\usepackage{indentfirst}
\usepackage{lmodern}
\usepackage[colorlinks=true,citecolor=blue]{hyperref}

\newtheorem{thm}{Theorem}[section]
\newtheorem{prop}[thm]{Proposition}
\newtheorem{coro}[thm]{Corollary}
\newtheorem{lemma}[thm]{Lemma}

\newtheorem{remark}{Remark}[section]

\numberwithin{equation}{section} 

\renewcommand{\geq}{\geqslant}
\renewcommand{\leq}{\leqslant}


\newcommand{\citeprop}[1]{Proposition \ref{#1}}

\newcommand{\citelem}[1]{Lemma \ref{#1}}

\newcommand{\opfont}{\mathbf}

\newcommand{\BE}[2][]{{\opfont{E}}^{#1}\!\left[#2\right]}
\newcommand{\bp}{\opfont{P}}
\newcommand{\BP}[2][]{{\opfont{P}}^{#1}\!\left(#2\right)}
\newcommand{\BF}{\mathcal{F}}
\newcommand{\BL}{\mathcal{L}}

\newcommand{\BFF}{\mathds{F}}

\newcommand{\dd}{\operatorname{d}\!}
\newcommand{\dt}{\operatorname{d}\! t}

\newcommand{\dnu}{\operatorname{d}\! \nu}

\newcommand{\seta}{\mathcal{A}}

\newcommand{\sett}{\mathcal{T}}
\newcommand{\setc}{\mathcal{C}}

\newcommand{\setr}{\mathcal{R}}


\newcommand{\nn}{\nonumber}
\newcommand{\p}{\partial}


\begin{document}
\title{Optimal redeeming strategy of stock loans\\ under drift uncertainty}
\author{Zuo Quan Xu\thanks{Department of Applied Mathematics, The Hong Kong Polytechnic University, Kowloon, Hong Kong. This author acknowledges the financial support received from NSFC (No.~11471276), and Hong Kong GRF (No.~15204216 and No.~15202817). Email:~\texttt{maxu@polyu.edu.hk}. }\and Fahuai Yi\thanks{School of Finance, Guangdong University of Foreign Studies, Guangzhou, China. This author acknowledges the financial supports from NNSF of China (No.~11771158), NSF Guangdong Province of China (No.~2016A030313448, No.~2017A030313397). Email:~\texttt{fhyi@scnu.edu.cn}.}}

\date{\today}
\maketitle

\begin{abstract}
In practice, one must recognize the inevitable incompleteness of information while making decisions. In this paper, we consider the optimal redeeming problem of stock loans under a state of incomplete information presented by the uncertainty in the (bull or bear) trends of the underlying stock. This is called drift uncertainty. Due to the unavoidable need for the estimation of trends while making decisions, the related Hamilton-Jacobi-Bellman (HJB) equation is of a degenerate parabolic type. Hence, it is very hard to obtain its regularity using the standard approach, making the problem different from the existing optimal redeeming problems without drift uncertainty. We present a thorough and delicate probabilistic and functional analysis to obtain the regularity of the value function and the optimal redeeming strategies. The optimal redeeming strategies of stock loans appear significantly different in the bull and bear trends. \bigskip\\
\noindent
\textbf{Keywords:} stock loan; drift uncertainty; optimal stopping; bull and bear trends; degenerate parabolic variational inequality\smallskip\\
\textbf{Mathematics Subject Classification:} 91B70; 35R35; 93E20; 60H30
\end{abstract}

\section{Introduction}
\noindent
The classical way to resolve optimal stopping problems (based on the dynamic programming principle) assumes that there is a unique known subjective prior distribution driving the underlying process. In the context of the classical Black-Scholes framework, in which the underlying asset process follows the geometric Brownian motion with known drift, the theory of financial optimal stopping problems has been well established (see, e.g., \cite{DJZZ09}, \cite{DTX12}, \cite{DX11}, \cite{DZ12}, \cite{G01a}, \cite{LZ13}, \cite{LTWW13}, \cite{SXZ08}, \cite{XZ07}, \cite{XZ13}, \cite{Xu16}, and \cite{ZJA12}). 
\par
In practice, on the other hand, while designing and controlling physical or organizational systems, one must recognize the inevitable incompleteness of information. For a wide range of applications in areas such as engineering, economics, and finance, the classical stochastic control theory, which is typically based on a single complete nominal model of the system, fails to provide strategies that yield satisfactory performance. Today, people are becoming increasingly aware of the importance of taking information incompleteness into account while dealing with stochastic control problems. In financial applications, as pointed out by Ekstr and Vaicenavicius \cite{EV15}, it is usually too strong to assume that the drift of the underlying asset is known. To obtain a reasonable and precise estimation of the drift, one needs a very long time series, which is rarely available, especially for an initial public offering stock for which the price history simply does not exist. To address this issue, various models have been proposed in the literature (see, e.g., \cite{G01b}, \cite{KX01}, \cite{XiongZhou2007}, \cite{S08}, \cite{SN08}, \cite{S09}, \cite{PZ10}, \cite{ZS13}, \cite{SZZ14}, and \cite{ZS14}). 
\par
Surprisingly, there have been only a few attempts to investigate financial optimal stopping problems under a state of incomplete information (\cite{DMV05}, \cite{EV15}, and \cite{V17}). In this paper, we study the optimal redeeming problem of stock loans under a state of incomplete information. A stock loan is a loan between a client (borrower) and a bank (lender), secured by a stock, which gives the borrower the right to redeem the stock at any time, on or before maturity, by repaying the lender the principal and a predetermined loan interest, or by surrendering the stock. Xia and Zhou \cite{XZ07} initiated the theoretical study of stock loan redemption (or equivalently, pricing) problem under the Black-Scholes framework. Through a probabilistic argument, they obtained a closed-form pricing formula for the standard stock loan for which the dividends are gained by the lender before redemption. Dai and Xu \cite{DX11} extended Xia and Zhou's work to general stock loans with different ways of dividend distributions through a PDE argument. Cai and Sun \cite{CS14} and Liang, Nie, and Zhao \cite{LNZ12} considered models with jumps. Zhang and Zhou \cite{ZZ00} and Prager and Zhang \cite{PZ10} studied models under regime-switching. 
\par
In regime-switching models, the current trend and volatility of the underlying stock are known to the borrower, although the trends may change anytime in the future. In contrast, in this paper, we assume that the borrower does not know the current trend of the stock and so she/he has to make decisions based on incomplete information. We choose to model the inherent uncertainty in the trends in the stock, called drift uncertainty, using a two-state random variable representing the bull and bear trends. The corresponding Hamilton-Jacobi-Bellman (HJB) equation turns out to be a degenerate parabolic one, which makes our problem far more challenging than the existing ones without drift uncertainty. The degeneracy is essentially due to the presence of the drift uncertainty in the model, and thus it cannot be removed by changing variables or using other standard ways in PDE. This unique feature leads to a failure in applying the PDE argument used in \cite{DX11} to tackle the present problem. In fact, the regularity of the value function is not good enough to let the HJB equation hold almost everywhere. Instead, it holds only in weak sense or in viscosity sense. In this paper, we present a thorough and delicate probabilistic and functional analysis to obtain the regularity of the value function and the optimal redeeming strategies. The optimal redeeming strategies of stock loans appear significantly different in the bull and bear trends. 
\par
The rest of this paper is organized as follows. We formulate the optimal redeeming problem of stock loans under drift uncertainty in Section 2. Some preliminary results on continuities of the value function are given in Section 3. We study the boundary cases in Section 4 and the general case in Section 5. A short conclusion is presented in Section 6.


\section{Problem formulation}
\subsection{The market and the underlying stock}
\noindent
We fix a filtered probability space $(\Omega, \BF,\BFF, \bp)$, which represents the financial market.
As usual, the filtration $\BFF=\{\BF_{t}\}_{t\geq 0}$ satisfies the usual conditions, and $\bp$ denotes the probability measure. In this probability space, there exists a standard one-dimensional Brownian motion $W$. The price process of the underlying stock is denoted as $S=(S_{t})_{t\geq 0}$ which evolves according to the stochastic differential equation (SDE):
\[\dd S_{t}=({\mu}-\delta) S_{t}\dt+\sigma S_{t}\dd W_{t},\]
where the dividend $\delta\geq 0$ and the volatility $\sigma>0$ are two known constants, while the return rate $\mu$ is unknown. 
By the self-similarity property of the Brownian motion, without loss of generality, we assume $\sigma=1$ throughout the paper.
\par
To model drift uncertainty, we assume $\mu$ is independent of the Brownian motion $W$, and $\mu-\delta$ may only take two possible values $a$ and $b$ that satisfy \[\Delta:=a-b>0.\] The stock is said to be in its \emph{bull trend} when $\mu-\delta=a$, and in its \emph{bear trend} when $\mu-\delta=b$.

\begin{remark}
If $\Delta=0$, then the drift uncertainty disappears and our financial market reduces to the classical Black-Scholes one. 
\end{remark}

\subsection{The stock loan and its optimal redeeming problem }
\noindent
A stock loan is a loan secured by a stock, which gives the borrower the right to redeem the stock at any time before or upon maturity. In this paper, we will study only the standard stock loan in which dividends are gained by the lender before redemption. For such a stock loan, when the borrower redeems the stock at time $t$, she/he has to repay the amount of $Ke^{\gamma t}$ to the lender and get back the stock (but the stock dividend gained up to time $t$ will be left with the lender). Here $K>0$ represents the loan principle (also called loan to value), and $\gamma$ represents the loan rate. Such a stock loan has been considered under different frameworks (\cite{DX11}, \cite{XZ07}, and \cite{ZZ00}). In practice, the loan rate $\gamma$ should be higher than the risk-free interest rate $r$: \[\gamma >r,\] which is henceforth assumed for the rest of this paper.
\par
This paper studies the borrower's optimal redeeming problem: that is, determining the optimal stopping time to achieve
\begin{align}\label{main000}
\sup_{\tau\in\sett_{[t,T]}}\;\BE{e^{-r(\tau-t)}\left(S_{\tau}-Ke^{\gamma \tau}\right)^{+}\;\big|\; \BF^{S}_{t}},
\end{align}
where $\sett_{[t,T]}$ denotes the set of all $\BFF^{S}$-stopping times valued in $[t,T]$. Here, $\BFF^{S}=\{\BF_{t}^{S}\}_{t\geq 0}$ is the natural filtration generated by the stock price process, which is observable for the borrower. 
If $\gamma=0$, the problem \eqref{main000} reduces to the optimal redeeming problem of the vanilla American call option under drift uncertainty (see more discussions in \cite{XZ07}). In our model, the trend of the underlying stock is unknown; hence, this is not a standard optimal stopping problem.
\par
To determine the optimal redeeming strategy, the borrower has to estimate the current trend of the stock first. For this, introduce the \emph{a posteriori probability process} $\pi=(\pi_{t})_{t\geq 0}$ defined as
\[\pi_{t}:=\BP{\mu-\delta=a\:\big|\: \BF_{t}^{S}}.\]
Roughly speaking, it estimates the probability that the stock is in its bull trend at time $t$. We assume $0<\pi_{0}<1$, otherwise the drift uncertainty disappears. 
\par
Sometimes, it is more convenient to consider the log-price process $L:=(\log S_{t})_{t\geq 0}$ which by It\^{o}'s lemma follows
\[\dd L_{t}=(\mu-\delta-\tfrac{1}{2})\dt+ \dd W_{t}.\] 
We notice that $\BFF^{S}$ is the same as $\BFF^{L}$, the filtration generated by $L$. According to the innovation representation (see, e.g., Proposition 2.30 in \cite{BC09}), the process
\[\overline{W}_{t}=L_{t}-\int_{0}^{t}\BE{\mu-\delta-\tfrac{1}{2}\:\big|\: \BF_{\nu}^{L}}\dnu=L_{t}-\int_{0}^{t}\left(\Delta\pi_{\nu}+b-\tfrac{1}{2}\right)\dnu\]
is a Brownian motion under the (observable) filtration $\BFF^{L}$. It then follows
\begin{align}\label{pro:L}
\dd L_{t}=\left(\Delta\pi_{t}+b-\tfrac{1}{2}\right)\dt+\dd \overline{W}_{t},
\end{align}
and applying It\^{o}'s lemma yields 
\begin{align}\label{pro:S}
\dd S_{t}=\left(\Delta\pi_{t}+b\right)S_{t}\dt+S_{t}\dd \overline{W}_{t}.
\end{align}
We notice that $b\leq \Delta\pi_{t}+b\leq a$, and this will be used frequently in the subsequent analysis without claim. 
An application of the general Bayes' formula (see, e.g., Chapter 7.9 in \cite{LS2001a}) and It\^{o}'s lemma give
\begin{align}\label{pro:pi}
\dd \pi_{t}=\Delta\pi_{t}(1-\pi_{t})\dd \overline{W}_{t}.
\end{align} 
Therefore, solving the problem \eqref{main000}, regarded as an optimal stopping problem of the Markovian processes \eqref{pro:S} and \eqref{pro:pi}, reduces to determining 
\begin{align}\label{main001}
V(s,\pi,t):=\sup_{\tau\in\sett_{[t,T]}}\;\BE{e^{-r(\tau-t)}\left(S_{\tau}-Ke^{\gamma \tau}\right)^{+}\big| S_{t}=s, \pi_{t}=\pi}
\end{align}
for $(s,\pi,t)$ in the domain \[\seta :=(0,+\infty)\times(0,1)\times[0,T).\] 
This problem is an optimal stopping problem of an observable Markov process $(S_{t}, \pi_{t})_{t\geq 0}$, hence the dynamic programming principle may be applied and one would expect to use the variational inequality techniques in PDE to tackle the corresponding HJB equation. 
Dai and Xu \cite{DX11} used this method to study the optimal redeeming problem of stock loans in case without drift uncertainty. Without drift uncertainty, the HJB equation in \cite{DX11} is uniformly parabolic. In contrast, in our problem, it is degenerate parabolic, for which it is difficult to obtain regularity from the PDE perspective,\footnote{See \cite{CYW11} for an example of the study of a degenerate parabolic equation.} making our problem more challenging than before. In this paper, we present a thorough and delicate probabilistic and functional analysis to the value function to obtain its regularity and optimal redeeming strategies. 
\par
Since the obstacle in the problem \eqref{main001} is time variant, it is convenient to introduce the discounted stock process 
$X_t=e^{-\gamma t}S_t$, which by It\^{o}'s lemma follows
\begin{align}\label{pro:X}
\dd X_t=\big(\Delta\pi_t+b-\gamma\big)X_t\dt+X_t\dd\overline{W}_t.
\end{align}
If we define 
\begin{align}\label{main00}
u(x,\pi,t):=\sup_{\tau\in\sett_{[t,T]}}\;\BE{e^{(\gamma-r)(\tau-t)}\left(X_{\tau}-K\right)^{+}\big| X_{t}=x, \pi_{t}=\pi}.
\end{align}
Then, one can easily see that 
\begin{align}\label{V=u}
V(s,\pi,t)=e^{\gamma t}u(e^{-\gamma t}s,\pi, t),\quad (s,\pi,t)\in\seta.
\end{align} 
Thus, solving the problem \eqref{main001} reduces to solving \eqref{main00}. From now on, we call $u(x,\pi,t)$ defined in \eqref{main00} the value function, and pay attention to it rather than to $V(s,\pi,t)$. 
\par
Define the continuation region
\begin{align}\label{C}
\setc&:=\{(x,\pi,t)\in\seta\mid u(x,\pi,t)>(x-K)^+\},
\end{align}
and the redeeming region
\begin{align}\label{R}
\setr&:=\{(x,\pi,t)\in\seta\mid u(x,\pi,t)=(x-K)^+\}.
\end{align}
By the general theory of optimal stopping, the optimal redeeming strategy for the problem \eqref{main00} as well as \eqref{main001} is given by the hitting time of the redeeming region $\setr$: that is, 
\begin{align*} 
\tau^{*}&=\inf\{t\in [0,T)\mid (X_{t},\pi_{t},t)\in\setr\}\wedge T\\
&=\inf\{t\in [0,T)\mid (e^{-\gamma t}S_t,\pi_{t},t)\in\setr\}\wedge T.
\end{align*}
It is not optimal to redeem the stock in the continuation region $\setc$. 
\par
The main purpose of this paper is then reduced to studying the value function $u(x,\pi,t)$ and determining the redeeming region $\setr$ and the continuation region $\setc$.
\par
In the problem \eqref{main00}, it is easily seen by choosing $\tau=T$ that $u>0$ on $\seta$. Consequently,
\begin{align}\label{regions}
\setc \supseteq \{(x,\pi,t)\in\seta \mid x\leq K\},\quad
\setr \subseteq \{(x,\pi,t)\in\seta \mid x> K\}.
\end{align} 
Therefore, the continuation region $\setc$ is always non-empty. In contrast, the redeeming region $\setr$, depending on the parameters as shown below, can be either empty or non-empty.

\section{Preliminaries and continuities}
\noindent
In this section, we study the continuities of the value function defined in \eqref{main00}, mainly using probabilistic analysis. 
\par

\subsection{Preliminaries}
\noindent
We will use the following elementary inequalities frequently in the subsequent analysis.
\begin{lemma}\label{lem:ineqs}
For any real-valued functions $f$, $g$, and real numbers $x$, $y$, we have
\begin{enumerate}
\item $|\sup f-\sup g|\leq \sup |f-g|$;
\item $|x^{+}-y^{+}|\leq |x-y|$; in particular, $|(x-K)^{+}-(y-K)^{+}|\leq |x-y|$;
\item $|e^{x}-e^{y}|\leq (e^{x}+e^{y})|x-y|$. 
\end{enumerate}
\end{lemma}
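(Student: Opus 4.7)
The plan is to handle the three parts separately, each by a short elementary argument, since all three are standard Lipschitz-type estimates.

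For part (1), I would use the usual two-sided bound. For every point $t$ in the common domain we have $f(t) \leq g(t) + |f(t)-g(t)| \leq g(t) + \sup|f-g|$, hence $\sup f \leq \sup g + \sup|f-g|$. By symmetry $\sup g \leq \sup f + \sup|f-g|$, and combining gives $|\sup f - \sup g|\leq \sup|f-g|$. The only fine point is to observe that the inequality is vacuous if either supremum is infinite, so we may assume both are finite.

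For part (2), I would just invoke that $x \mapsto x^{+}=\max(x,0)$ is $1$-Lipschitz, which one can verify either by splitting into the four sign cases for $(x,y)$ or by using the representation $x^{+}=\tfrac{1}{2}(x+|x|)$ together with the triangle inequality $\bigl||x|-|y|\bigr|\leq |x-y|$. The particular case $|(x-K)^{+}-(y-K)^{+}|\leq |x-y|$ follows by applying the first inequality to the pair $(x-K,y-K)$, whose difference still equals $x-y$.

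For part (3), I would assume without loss of generality that $x\geq y$ and apply the mean value theorem to the exponential on $[y,x]$: there is some $\xi\in[y,x]$ with $e^{x}-e^{y}=e^{\xi}(x-y)$. Since $e^{\xi}\leq \max(e^{x},e^{y})\leq e^{x}+e^{y}$, the required bound $|e^{x}-e^{y}|\leq (e^{x}+e^{y})|x-y|$ is immediate.

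None of these steps presents a real obstacle; the statements are entirely elementary. The point of collecting them now is that (1) will be used repeatedly to compare value functions written as suprema over stopping times, (2) will be used to deal with the payoff $(X_{\tau}-K)^{+}$ when proving continuity in $x$, and (3) will convert multiplicative increments of the price process $S_{t}$ into additive increments of the log-price $L_{t}=\log S_{t}$ when proving continuity in the $\pi$-variable via the dynamics \eqref{pro:L} and \eqref{pro:pi}.
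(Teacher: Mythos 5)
Your proposal is correct and matches the paper's (very terse) proof, which likewise invokes the mean value theorem for part (3) and treats (1) and (2) as elementary checks. The details you supply — the two-sided supremum bound, the $1$-Lipschitz property of $x\mapsto x^{+}$, and the bound $e^{\xi}\leq e^{x}+e^{y}$ for $\xi$ between $x$ and $y$ — are exactly the standard arguments intended.
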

\begin{proof}
The last inequality follows from the mean value theorem. The others are easy to check. 
\end{proof}

The following result gives an upper bound for the value function. As a consequence, the problem \eqref{main00} is meaningful. 
\begin{lemma}
We have
\begin{align*}
u(x,\pi,t)\leq xe^{(a-r)^{+}(T-t)},\quad (x, \pi, t)\in\seta.
\end{align*}
\end{lemma}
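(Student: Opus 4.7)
The plan is to drop the positive part, replace the stopped payoff by a stochastic exponential times a deterministic factor, and then conclude by optional stopping.

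First I would note that $(X_{\tau}-K)^{+}\leq X_{\tau}$, so it suffices to show that
\[
\sup_{\tau\in\sett_{[t,T]}}\BE{e^{(\gamma-r)(\tau-t)}X_{\tau}\;\big|\; X_{t}=x,\pi_{t}=\pi}\;\leq\; xe^{(a-r)^{+}(T-t)}.
\]
Next I would solve the SDE \eqref{pro:X} explicitly: on $\{X_{t}=x\}$,
\[
X_{\tau}=x\exp\!\left(\int_{t}^{\tau}\!\big(\Delta\pi_{\nu}+b-\gamma-\tfrac{1}{2}\big)\dnu+\overline{W}_{\tau}-\overline{W}_{t}\right),
\]
so
\[
e^{(\gamma-r)(\tau-t)}X_{\tau}=x\exp\!\left(\int_{t}^{\tau}\!\big(\Delta\pi_{\nu}+b-r-\tfrac{1}{2}\big)\dnu+\overline{W}_{\tau}-\overline{W}_{t}\right).
\]
Now I would use the pointwise bound $\Delta\pi_{\nu}+b\leq a$ noted just after \eqref{pro:S} to dominate the deterministic part of the integrand, obtaining
\[
e^{(\gamma-r)(\tau-t)}X_{\tau}\leq x\,e^{(a-r)(\tau-t)}\cdot Z_{\tau},\qquad Z_{s}:=\exp\!\big(\overline{W}_{s}-\overline{W}_{t}-\tfrac{1}{2}(s-t)\big).
\]

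The process $(Z_{s})_{s\geq t}$ is an $\BFF^{L}$-martingale with $Z_{t}=1$. Since every $\tau\in\sett_{[t,T]}$ is bounded by $T$, Doob's optional stopping theorem yields $\BE{Z_{\tau}}=1$. I would split two cases to handle the deterministic exponential factor: if $a\geq r$, then $e^{(a-r)(\tau-t)}\leq e^{(a-r)(T-t)}=e^{(a-r)^{+}(T-t)}$ pointwise; if $a<r$, then $e^{(a-r)(\tau-t)}\leq 1=e^{(a-r)^{+}(T-t)}$ since $\tau\geq t$. Either way,
\[
\BE{e^{(\gamma-r)(\tau-t)}X_{\tau}\;\big|\; X_{t}=x,\pi_{t}=\pi}\leq x\,e^{(a-r)^{+}(T-t)}\,\BE{Z_{\tau}}=x\,e^{(a-r)^{+}(T-t)},
\]
and taking the supremum over $\tau$ finishes the argument.

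I don't expect any real obstacle here: the only technical point is justifying $\BE{Z_{\tau}}=1$, which is immediate from boundedness of $\tau$. The case split on the sign of $a-r$ is what produces the $(a-r)^{+}$ in the stated bound, and without it one would get a worse estimate when $a<r$.
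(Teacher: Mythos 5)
Your proposal is correct and follows essentially the same route as the paper: drop the positive part, solve \eqref{pro:X} explicitly, bound the drift term via $\Delta\pi_{\nu}+b\leq a$, and use that the exponential martingale has unit expectation at any bounded stopping time. The paper's last line absorbs your case split into the single inequality $e^{(a-r)(\tau-t)}\leq e^{(a-r)^{+}(T-t)}$, but the argument is the same.
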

\begin{proof}
For any $\tau\in\sett_{[t,T]}$, applying It\^{o}'s lemma to \eqref{pro:X} gives
\[ X_{\tau}= X_{t}e^{\int_{t}^{\tau}(\Delta\pi_{\nu}+b-\gamma-\frac{1}{2})\dnu+\overline{W}_{\tau}-\overline{W}_{t} } 
\leq X_{t}e^{(a-\gamma-\frac{1}{2})(\tau-t)+\overline{W}_{\tau}-\overline{W}_{t} },\]
and thus
\begin{align*}
u(x,\pi,t)&=\sup_{\tau\in\sett_{[t,T]}}\;\BE{e^{(\gamma-r)(\tau-t)}\left(X_{\tau}-K\right)^{+}\big| X_{t}=x, \pi_{t}=\pi }\\
&\leq\sup_{\tau\in\sett_{[t,T]}}\;\BE{e^{(\gamma-r)(\tau-t)} X_{\tau}\big| X_{t}=x, \pi_{t}=\pi }\\
&\leq\sup_{\tau\in\sett_{[t,T]}}\;\BE{X_{t}e^{(a-r-\frac{1}{2})(\tau-t)+\overline{W}_{\tau}-\overline{W}_{t} }\big| X_{t}=x, \pi_{t}=\pi }\\
&=\sup_{\tau\in\sett_{[t,T]}}\;\BE{xe^{(a-r)(\tau-t)}e^{-\frac{1}{2}(\tau-t)+\overline{W}_{\tau}-\overline{W}_{t} }\big| \pi_{t}=\pi }\\
&\leq \sup_{\tau\in\sett_{[t,T]}}\;\BE{xe^{(a-r)^{+}(T-t)}e^{-\frac{1}{2}(\tau-t)+\overline{W}_{\tau}-\overline{W}_{t} }\big| \pi_{t}=\pi }\\
&=xe^{(a-r)^{+}(T-t)}.
\end{align*}
\end{proof}
\par
The next result states the monotonicity of the value function.
\begin{lemma}\label{umono}
The value function $u(x,\pi,t)$ on $\seta$ is non-decreasing in $\pi$, non-increasing in $t$, and non-decreasing and convex in $x$. 
\end{lemma}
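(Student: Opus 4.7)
The plan is to exploit the Markov structure of $(X_t,\pi_t)$ together with the fact that the SDEs \eqref{pro:X} and \eqref{pro:pi} are time-homogeneous. Because of this, I can replace the conditional expectation in \eqref{main00} by a time-shifted one: for any $(x,\pi,t)\in\seta$,
\[
u(x,\pi,t)=\sup_{\tau\in\sett_{[0,T-t]}}\BE{e^{(\gamma-r)\tau}(X_\tau-K)^{+}\,\big|\,X_{0}=x,\pi_{0}=\pi},
\]
where now $\tau$ runs over stopping times (with respect to the filtration generated by $\overline{W}$) in $[0,T-t]$. Moreover, solving \eqref{pro:X} explicitly by It\^o's lemma yields
\[
X_{\tau}^{x,\pi}=x\cdot Z_{\tau}^{\pi},\qquad Z_{\tau}^{\pi}:=\exp\!\left(\int_{0}^{\tau}\!(\Delta\pi_{\nu}+b-\gamma-\tfrac12)\dnu+\overline{W}_{\tau}\right),
\]
a representation in which the dependence on the initial stock value $x$ is factored out.

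Monotonicity in $t$ is then immediate: if $t_{1}\leq t_{2}$ then $\sett_{[0,T-t_{2}]}\subseteq\sett_{[0,T-t_{1}]}$, so the supremum over the larger set is at least as large, giving $u(x,\pi,t_{1})\geq u(x,\pi,t_{2})$. Monotonicity and convexity in $x$ follow from the factorization above. For every fixed $\omega$ and $\tau$, the map $x\mapsto(xZ_{\tau}^{\pi}(\omega)-K)^{+}$ is non-decreasing and convex in $x\in(0,\infty)$ since $Z_{\tau}^{\pi}>0$. Both properties are preserved by taking expectation (using \citelem{lem:ineqs} for the required integrability and continuity) and by taking supremum over a family of functions, because the supremum of non-decreasing (resp.\ convex) functions is non-decreasing (resp.\ convex).

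Monotonicity in $\pi$ is the one step that needs a genuine probabilistic ingredient, and is where I expect the mild technical difficulty to lie. Let $\pi^{1}$ and $\pi^{2}$ denote the solutions of \eqref{pro:pi} driven by the same Brownian motion $\overline{W}$ with initial values $\pi^{1}_{0}=\pi_{1}\leq\pi_{2}=\pi^{2}_{0}$. The diffusion coefficient $\sigma(\pi)=\Delta\pi(1-\pi)$ is globally Lipschitz on $[0,1]$, and a standard argument (together with the fact that $\sigma$ vanishes at $0$ and $1$) shows that $\pi^{i}\in(0,1)$ for all times; hence the classical comparison theorem for one-dimensional SDEs yields $\pi^{1}_{\nu}\leq\pi^{2}_{\nu}$ for all $\nu\geq 0$, almost surely. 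Plugging this into the explicit formula for $X$ gives the pathwise inequality $X^{x,\pi_{1}}_{\tau}\leq X^{x,\pi_{2}}_{\tau}$, hence $(X^{x,\pi_{1}}_{\tau}-K)^{+}\leq(X^{x,\pi_{2}}_{\tau}-K)^{+}$, and taking expectation and then supremum over $\tau$ delivers $u(x,\pi_{1},t)\leq u(x,\pi_{2},t)$. A subtlety worth flagging is that the admissible class of stopping times is the same for both initial conditions (both filtrations coincide with that of $\overline{W}$), so the monotonicity really passes through at the level of the supremum, not merely for each individual $\tau$.
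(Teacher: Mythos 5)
Your proof is correct, and for the monotonicity/convexity in $x$ and the monotonicity in $t$ it coincides with the paper's argument: both rest on the time-homogeneous representation $u(x,\pi,t)=\sup_{\tau\in\sett_{[0,T-t]}}\BE{e^{(\gamma-r)\tau}(xZ^{\pi}_{\tau}-K)^{+}\mid\pi_{0}=\pi}$, the nesting $\sett_{[0,T-t_{2}]}\subseteq\sett_{[0,T-t_{1}]}$, and the preservation of convexity and monotonicity under expectations and suprema. The only genuine difference is in the $\pi$-monotonicity step. You obtain the pathwise ordering $\pi^{1}_{\nu}\leq\pi^{2}_{\nu}$ by appealing to the classical comparison theorem for one-dimensional SDEs (legitimate here, since the two equations share the zero drift and the Lipschitz diffusion coefficient $\Delta\pi(1-\pi)$, so pathwise uniqueness and the Yamada--Watanabe condition hold). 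The paper instead proves the ordering by hand: it applies It\^{o}'s lemma to the logit $\log(\pi_{t}/(1-\pi_{t}))$, which turns \eqref{pro:pi} into an equation with additive noise $\Delta\overline{W}_{t}$, and then derives a contradiction at the first crossing time, actually yielding the \emph{strict} ordering $\pi_{t}>\pi'_{t}$. Your route is shorter and imports a standard theorem; the paper's route is self-contained and gives the slightly stronger strict comparison (which is not needed for this lemma but reflects the structure of the filter). One small caution: your parenthetical claim that $\pi^{i}\in(0,1)$ for all times deserves its own justification (e.g., via the logit transformation or uniqueness against the constant solutions $0$ and $1$), though it is not actually required for the comparison theorem to apply.
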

\begin{proof}
Both $X_{t}$ and $\pi_{t}$ are stationary processes; so we have
\begin{align}\label{umonoexp}
u(x,\pi,t)&=\sup_{\tau\in\sett_{[t,T]}}\;\BE{e^{(\gamma-r)(\tau-t)}\left(X_{\tau}-K\right)^{+}\big| X_{t}=x, \pi_{t}=\pi }\nn\\
&=\sup_{\tau\in\sett_{[t,T]}}\;\BE{e^{(\gamma-r)(\tau-t)}\left(X_{t}e^{\int_{t}^{\tau}(\Delta\pi_{\nu}+b-\gamma-\frac{1}{2})\dnu+\overline{W}_{\tau}-\overline{W}_{t} }-K\right)^{+}\bigg| X_{t}=x, \pi_{t}=\pi }\nn\\
&=\sup_{\tau\in\sett_{[t,T]}}\;\BE{e^{(\gamma-r)(\tau-t)}\left(xe^{\int_{t}^{\tau}(\Delta\pi_{\nu}+b-\gamma-\frac{1}{2})\dnu+\overline{W}_{\tau}-\overline{W}_{t} }-K\right)^{+}\bigg| \pi_{t}=\pi }\nn\\
&=\sup_{\tau\in\sett_{[0,T-t]}}\;\BE{e^{(\gamma-r)\tau}\left(xe^{\int_{0}^{\tau}(\Delta\pi_{\nu}+b-\gamma-\frac{1}{2})\dnu+\overline{W}_{\tau} }-K\right)^{+}\bigg| \pi_{0}=\pi }.
\end{align} 
It follows that $u(x,\pi,t)$ is convex and non-decreasing in $x$, and non-increasing in $t$. 
\par
Now, let us show that $u$ is non-decreasing in $\pi$.
Without loss of generality, we may assume $t=0$. It\^{o}'s lemma yields
\begin{align*}
\dd \;\log\left(\frac{\pi_{t}}{1-\pi_{t}}\right)&=\dd \;\log \pi_{t}-\dd \;\log(1-\pi_{t})\\
&=\Delta (1-\pi_{t})\dd \overline{W}_{t}-\tfrac{1}{2}\Delta^{2}(1-\pi_{t})^{2}\dt+\Delta\pi_{t}\dd \overline{W}_{t}+\tfrac{1}{2}\Delta^{2} \pi_{t}^{2}\dt\\
&=\Delta^{2}\left(\pi_{t}-\tfrac{1}{2}\right)\dt+\Delta\dd \overline{W}_{t},
\end{align*}
so that
\begin{align*}
\log\left(\frac{\pi_{t}}{1-\pi_{t}}\right)=\log\left(\frac{\pi_{0}}{1-\pi_{0}}\right)+\Delta^{2}\int_{0}^{t} \left(\pi_{\nu}-\tfrac{1}{2}\right)\dnu+\Delta\overline{W}_{t}.
\end{align*}
If $\pi_{t}$ and $\pi'_{t}$ are the solutions of \eqref{pro:pi} with initial values $\pi_{0} >\pi'_{0}$ (both in $(0,1)$), then $\tau=\inf\{t\geq0\mid \pi_{t}\leq\pi'_{t}\}>0$. If $\tau(\omega)$ is finite, then, by continuity, we have $\pi_{\tau}(\omega)=\pi'_{\tau}(\omega)\in(0,1)$ and thus on this sample path $\omega$
\begin{multline*}
\qquad 0=\log\left(\frac{\pi_{\tau}}{1-\pi_{\tau}}\right)-\log\left(\frac{\pi'_{\tau}}{1-\pi'_{\tau}}\right) \\
=\log\left(\frac{\pi_{0}}{1-\pi_{0}}\right)-\log\left(\frac{\pi'_{0}}{1-\pi'_{0}}\right)+\Delta^{2}\int_{0}^{\tau} (\pi_{\nu}-\pi'_{\nu})\dnu>0,\qquad
\end{multline*}
a contradiction. Thus, $\tau(\omega)=+\infty$ almost surely, and consequently $\pi_{t}(\omega)>\pi'_{t}(\omega)$ for all $t$ almost surely. Therefore, $\pi_{t}$ is non-decreasing $\omega$-wisely with respect to the initial value $\pi_{0}$, and thus, from the last expression in \eqref{umonoexp} we see that $u$ is non-decreasing in $\pi$. 
\end{proof}
\begin{lemma}\label{lem:uxbound}
We have, uniformly, for all $ 0\leq y\leq x<\infty$, $\pi\in [0,1]$, and $ t\in [0,T]$, that
\begin{align}\label{uxbound1}
0\leq u(x,\pi,t)-u(y,\pi,t)\leq (x-y)e^{(a-r)^{+}(T-t)}.
\end{align}
Especially, if $r\geq a$, then 
\begin{align}\label{uxbound2}
0\leq u(x,\pi,t)-u(y,\pi,t)\leq x-y.
\end{align}
\end{lemma}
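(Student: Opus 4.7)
The non-negativity $u(x,\pi,t)-u(y,\pi,t)\ge 0$ is immediate from the fact that $u$ is non-decreasing in $x$, which was already established in Lemma \ref{umono}. So the task is to prove the upper bound, and the case $r\ge a$ follows from the general bound by noting that $(a-r)^+=0$.

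The plan is to start from the clean time-homogeneous representation obtained inside the proof of Lemma \ref{umono}, namely
\[
u(x,\pi,t)=\sup_{\tau\in\sett_{[0,T-t]}}\BE{e^{(\gamma-r)\tau}\bigl(xM_\tau-K\bigr)^{+}\,\big|\,\pi_{0}=\pi},
\qquad
M_\tau:=e^{\int_{0}^{\tau}(\Delta\pi_{\nu}+b-\gamma-\frac{1}{2})\dnu+\overline{W}_{\tau}}.
\]
Applying the inequalities $|\sup f-\sup g|\le\sup|f-g|$ and $|(u-K)^+-(v-K)^+|\le|u-v|$ from Lemma \ref{lem:ineqs}, I get
\[
u(x,\pi,t)-u(y,\pi,t)\le (x-y)\sup_{\tau\in\sett_{[0,T-t]}}\BE{e^{(\gamma-r)\tau}M_\tau\,\big|\,\pi_{0}=\pi}.
\]

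The next step is to dominate the exponential $e^{(\gamma-r)\tau}M_\tau$. Using the a.s. bound $\Delta\pi_\nu+b\le a$, which was highlighted right after \eqref{pro:S}, I can write
\[
e^{(\gamma-r)\tau}M_\tau
= e^{\int_{0}^{\tau}(\Delta\pi_\nu+b-r-\frac{1}{2})\dnu+\overline{W}_\tau}
\le e^{(a-r)\tau}\,e^{-\frac{1}{2}\tau+\overline{W}_\tau}
\le e^{(a-r)^{+}(T-t)}\,e^{-\frac{1}{2}\tau+\overline{W}_\tau},
\]
since $\tau\le T-t$. Taking expectation and invoking the fact that $Z_\tau:=e^{-\frac{1}{2}\tau+\overline{W}_\tau}$ is the exponential martingale associated with $\overline{W}$ (which is a Brownian motion under $\BFF^{L}$), optional stopping at the bounded stopping time $\tau\le T-t$ yields $\BE{Z_\tau}=1$. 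Combining these pieces gives
\[
\sup_{\tau\in\sett_{[0,T-t]}}\BE{e^{(\gamma-r)\tau}M_\tau\,\big|\,\pi_{0}=\pi}\le e^{(a-r)^{+}(T-t)},
\]
which is exactly \eqref{uxbound1}. The specialization to $r\ge a$ gives \eqref{uxbound2}.

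There is no real obstacle here: every ingredient (the time-homogeneous representation, the elementary inequalities of Lemma \ref{lem:ineqs}, and the a.s. bound $\Delta\pi_\nu+b\le a$) is already available in the excerpt. The only point requiring a moment's care is the application of optional stopping to $Z_\tau$, and this is painless because $\tau$ is bounded by $T-t$, so $Z_\tau$ is uniformly integrable and $\BE{Z_\tau}=\BE{Z_0}=1$ without further conditions.
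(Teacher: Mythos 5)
Your proof is correct and follows essentially the same route as the paper: both start from the time-homogeneous representation in \eqref{umonoexp}, apply Lemma \ref{lem:ineqs}, bound the drift by $a$, and conclude via the unit expectation of the stopped exponential martingale $e^{-\frac{1}{2}\tau+\overline{W}_{\tau}}$. The only difference is that you spell out the optional-stopping step that the paper leaves implicit.
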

\begin{proof}
Applying \citelem{lem:ineqs} to the last expression in \eqref{umonoexp}, we have
\begin{align*}
|u(x,\pi,t)-u(y,\pi,t)|&\leq |x-y|\sup_{\tau\in\sett_{[0,T-t]}}\;\BE{e^{(\gamma-r)\tau}e^{\int_{0}^{\tau}(\Delta\pi_{\nu}+b-\gamma-\frac{1}{2})\dnu+\overline{W}_{\tau} }\bigg| \pi_{0}=\pi }\\
&\leq |x-y|\sup_{\tau\in\sett_{[0,T-t]}}\;\BE{e^{(a-r)\tau}e^{-\frac{1}{2}\tau+\overline{W}_{\tau} }\bigg| \pi_{0}=\pi }\\
&\leq |x-y| e^{(a-r)^{+}(T-t)}.
\end{align*}
\end{proof}

\begin{lemma}\label{lem:upibound}
We have, uniformly for all $x\in (0,\infty)$, $0\leq \varpi\leq \pi\leq 1$ and $ t\in [0,T]$, that
\begin{align}\label{upibound1}
0\leq u(x,\pi,t)- u(x,\varpi,t)\leq C\Delta x(T-t)(\pi-\varpi), 
\end{align} 
for some constant $C>0$. 
\end{lemma}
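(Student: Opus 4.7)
The lower bound $u(x,\pi,t)-u(x,\varpi,t)\ge 0$ is immediate from the monotonicity in $\pi$ already shown in \citelem{umono}, so the real content is the upper bound.

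My plan is to exploit the last expression in \eqref{umonoexp} and apply \citelem{lem:ineqs}(i)--(iii) in sequence. Denote $A_p(\tau)=\int_0^\tau(\Delta\pi^{(p)}_\nu+b-\gamma-\tfrac12)\dnu+\overline W_\tau$ where $\pi^{(p)}$ is the solution of \eqref{pro:pi} starting from $p$. Using (i) to move the supremum outside, (ii) to kill the $(\cdot-K)^+$, and (iii) to linearize the exponentials, I would arrive at
\[
u(x,\pi,t)-u(x,\varpi,t)\le x\sup_{\tau\in\sett_{[0,T-t]}}\BE{e^{(\gamma-r)\tau}\bigl(e^{A_\pi(\tau)}+e^{A_\varpi(\tau)}\bigr)\bigl|A_\pi(\tau)-A_\varpi(\tau)\bigr|\:\big|\:\pi_0=\pi,\varpi_0=\varpi}.
\]
The drifts and diffusions drop out cleanly: $\Delta\pi_\nu+b-\gamma\le a-\gamma$ yields $e^{A_p(\tau)}\le e^{(a-\gamma-1/2)\tau+\overline W_\tau}$, and the pathwise comparison established inside the proof of \citelem{umono} gives $\pi_\nu\ge\varpi_\nu$ a.s., so $|A_\pi(\tau)-A_\varpi(\tau)|=\Delta\int_0^\tau(\pi_\nu-\varpi_\nu)\dnu$. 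Factoring $e^{(\gamma-r)\tau}e^{(a-\gamma-1/2)\tau}\le e^{(a-r)^+T}e^{-\tau/2}$ reduces the task to bounding
\[
\sup_{\tau\in\sett_{[0,T-t]}}\BE{Z_\tau\int_0^\tau(\pi_\nu-\varpi_\nu)\dnu},\qquad Z_t:=e^{-t/2+\overline W_t},
\]
where $Z$ is a positive martingale with $\BE{Z_t^2}=e^t$.

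To handle the random upper limit I would use Fubini followed by optional stopping on the bounded stopping time $\tau\le T-t$. Writing the integral as $\int_0^{T-t}(\pi_\nu-\varpi_\nu)\mathds{1}_{\nu\le\tau}\dnu$, observing that $\mathds{1}_{\nu\le\tau}\in\BF_\nu$, and applying optional stopping to $Z_{\tau\vee\nu}$ conditioned on $\BF_\nu$ yields $\BE{Z_\tau\mathds{1}_{\nu\le\tau}\mid\BF_\nu}=Z_\nu\mathds{1}_{\nu\le\tau}$. Together with $\pi_\nu-\varpi_\nu\ge 0$ and $\mathds{1}_{\nu\le\tau}\le 1$, this reduces the expression to $\int_0^{T-t}\BE{Z_\nu(\pi_\nu-\varpi_\nu)}\dnu$.

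The last step is an $L^2$ estimate for $\pi_\nu-\varpi_\nu$. Subtracting the SDEs \eqref{pro:pi} shows $\pi_t-\varpi_t$ is a martingale with diffusion coefficient $\Delta[(\pi_t-\varpi_t)(1-\pi_t-\varpi_t)]$, whose absolute value is bounded by $\Delta|\pi_t-\varpi_t|$ since $0\le\pi_t,\varpi_t\le 1$. It\^o's isometry therefore gives $\frac{d}{dt}\BE{(\pi_t-\varpi_t)^2}\le\Delta^2\BE{(\pi_t-\varpi_t)^2}$, whence Gronwall yields $\BE{(\pi_\nu-\varpi_\nu)^2}\le(\pi-\varpi)^2 e^{\Delta^2\nu}$. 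Cauchy--Schwarz then gives $\BE{Z_\nu(\pi_\nu-\varpi_\nu)}\le(\pi-\varpi)e^{(1+\Delta^2)\nu/2}$. Integrating over $\nu\in[0,T-t]$ and collecting the prefactors produces the claimed bound with $C=2\exp\!\bigl((a-r)^+T+(1+\Delta^2)T/2\bigr)$.

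The main obstacle, I expect, will be the Fubini/optional-stopping manipulation in the second paragraph, because the stopping time is random and one must argue carefully that $\{\nu\le\tau\}$ is $\BF_\nu$-measurable before the martingale property of $Z$ can be invoked; every other estimate is routine Gronwall/Cauchy--Schwarz bookkeeping.
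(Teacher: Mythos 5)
Your proof is correct, and while it shares the paper's skeleton (start from the last expression in \eqref{umonoexp}, apply \citelem{lem:ineqs} to reduce to $\sup_\tau\BE{(\text{exponential factor})\,|A_\pi(\tau)-A_\varpi(\tau)|}$, and finish with a second-moment estimate on $\pi_\nu-\varpi_\nu$), the core of your argument is genuinely different from the paper's. The paper applies Cauchy--Schwarz once at the stopping time, bounds $|A_\pi(\tau)-A_\varpi(\tau)|\le\Delta T\sup_{s\le T}|\pi_s-\pi'_s|$, and then invokes the standard SDE moment estimate $\BE{\sup_{s\le T}(\pi_s-\pi'_s)^2}\le C(\pi-\varpi)^2$ cited from Pham's book (a BDG-plus-Gronwall result). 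You instead use the pathwise comparison $\pi_\nu\ge\varpi_\nu$ (already established inside the proof of \citelem{umono}) to drop the absolute values, then Fubini and optional stopping of the exponential martingale $Z$ to replace $Z_\tau$ by $Z_\nu$ at deterministic times, followed by a pointwise Cauchy--Schwarz and an elementary It\^o-isometry/Gronwall bound on $\BE{(\pi_\nu-\varpi_\nu)^2}$ obtained directly from \eqref{pro:pi}. Your optional-stopping step is sound: $\{\nu\le\tau\}=\{\tau<\nu\}^c\in\BF_\nu$ for a stopping time $\tau$ under the usual conditions, and $\tau\vee\nu$ is a bounded stopping time dominating $\nu$, so $\BE{Z_{\tau\vee\nu}\mid\BF_\nu}=Z_\nu$ as claimed. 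What each route buys: yours is fully self-contained (no appeal to the sup-moment estimate for SDEs) and exploits the monotone coupling of the filter, at the price of the Fubini/optional-stopping bookkeeping; the paper's is shorter but outsources the key estimate to a textbook reference and needs no comparison principle. Both deliver the same linear factor $\Delta x(T-t)(\pi-\varpi)$ with a constant depending on $T$, $\Delta$, $a$, $r$.
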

\begin{proof}
Let $\pi_{t}$ and $\pi'_{t}$ be the solutions of \eqref{pro:pi} with initial values $\pi \geq \varpi$, respectively.
Denote 
\begin{align*}
Y_{t}=\int_{0}^{t}(\Delta\pi_{\nu}+b-\gamma-\tfrac{1}{2})\dnu+\overline{W}_{t}\leq (a-\gamma+\tfrac{1}{2})^{+}t+\overline{W}_{t}-t, 
\end{align*}
and
\begin{align*}
Y'_{t}=\int_{0}^{t}(\Delta\pi'_{\nu}+b-\gamma-\tfrac{1}{2})\dnu+\overline{W}_{t}\leq (a-\gamma+\tfrac{1}{2})^{+}t+\overline{W}_{t}-t.
\end{align*}
Then 
\begin{align*}
\sup_{s\leq T}|Y_{s}-Y'_{s}|\leq \Delta T\sup_{s\leq T} |\pi_{s}-\pi'_{s}|.
\end{align*}
Similar to what was seen before, applying \citelem{lem:ineqs} to the last expression in \eqref{umonoexp} and using Cauthy's inequality, we have
\begin{align*}
|u(x,\pi,t)-u(x,\varpi,t)|&\leq x\sup_{\tau\in\sett_{[0,T-t]}}\;\BE{e^{(\gamma-r)\tau} \left|e^{Y_{\tau}}-e^{Y'_{\tau}}\right|\bigg| \pi_{0}=\pi,\pi'_{0}=\varpi }\\
&\leq xe^{(\gamma-r)(T-t)}\sup_{\tau\in\sett_{[0,T-t]}}\;\BE{ \left|e^{Y_{\tau}}-e^{Y'_{\tau}}\right|\bigg| \pi_{0}=\pi,\pi'_{0}=\varpi }\\
&\leq xe^{(\gamma-r)(T-t)}\sup_{\tau\in\sett_{[0,T-t]}}\;\BE{\left(e^{Y_{\tau}}+e^{Y'_{\tau}}\right)|Y_{\tau}-Y'_{\tau}|\big| \pi_{0}=\pi,\pi'_{0}=\varpi }\\
&\leq xe^{(\gamma-r)(T-t)}\sup_{\tau\in\sett_{[0,T-t]}}\;\BE{2e^{(a-\gamma+\tfrac{1}{2})^{+}(T-t)+\overline{W}_{\tau}-\tau}|Y_{\tau}-Y'_{\tau}|\big| \pi_{0}=\pi,\pi'_{0}=\varpi }\\
&\leq 2xe^{(\gamma-r+(a-\gamma+\tfrac{1}{2})^{+})(T-t)}\sup_{\tau\in\sett_{[0,T-t]}}\;\left(\BE{e^{2\overline{W}_{\tau}-2\tau}\big| \pi_{0}=\pi,\pi'_{0}=\varpi }\right)^{1/2}\\
&\quad\;\times\sup_{\tau\in\sett_{[0,T-t]}}\;\left(\BE{(Y_{\tau}-Y'_{\tau})^{2}\big| \pi_{0}=\pi,\pi'_{0}=\varpi }\right)^{1/2}\\
&\leq Cx \left( \Delta^{2}(T-t)^{2}\BE{\sup_{s\leq T-t}(\pi_{s}-\pi'_{s})^{2}\big| \pi_{0}=\pi,\pi'_{0}=\varpi }\right)^{1/2}\\
&\leq Cx \Delta (T-t)(\pi-\varpi).
\end{align*}
where the last inequality is due to (1.19), p.25, in \cite{P09}. 
Here, and hereafter, the implied constant $C$ may vary in each appearance. 
\end{proof}

\begin{lemma}\label{lem:upibound}
We have, uniformly for $x\in (0,\infty)$, $\pi\in[0,1]$ and $0\leq s\leq t\leq T$, that
\begin{align}\label{utbound1}
0\leq u(x,\pi,s)-u(x,\pi,t)\leq Cx(t-s)^{\frac{1}{2}},
\end{align} 
for some constant $C>0$.
\end{lemma}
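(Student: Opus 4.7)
The plan is to exploit the stationary representation in \eqref{umonoexp}: write
\[
u(x,\pi,t) = \sup_{\tau \in \sett_{[0,T-t]}} \BE{g(\tau, Y_\tau) \mid \pi_0 = \pi}, \qquad g(\alpha, y) := e^{(\gamma-r)\alpha}(xe^y - K)^+,
\]
where $Y_u := \int_0^u(\Delta\pi_\nu + b - \gamma - \tfrac{1}{2})\dnu + \overline{W}_u$. The lower bound $u(x,\pi,s) \geq u(x,\pi,t)$ is immediate from \citelem{umono}. For the upper bound, the key observation is that $\sett_{[0,T-t]} \subseteq \sett_{[0,T-s]}$ when $s \leq t$, and for any $\tau \in \sett_{[0,T-s]}$ the truncation $\tau' := \tau \wedge (T-t) \in \sett_{[0,T-t]}$ satisfies $0 \leq \tau - \tau' \leq t - s$. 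Applying \citelem{lem:ineqs}(1), the upper bound reduces to showing
\[
\sup_{\tau \in \sett_{[0,T-s]}} \BE{|g(\tau, Y_\tau) - g(\tau', Y_{\tau'})| \mid \pi_0 = \pi} \leq Cx(t-s)^{1/2}.
\]

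Next I would decompose the pointwise difference by the triangle inequality into an $\alpha$-variation piece $|e^{(\gamma-r)\tau} - e^{(\gamma-r)\tau'}|(xe^{Y_\tau}-K)^+ \leq C(t-s)\,xe^{Y_\tau}$ and a $y$-variation piece $e^{(\gamma-r)\tau'}|(xe^{Y_\tau}-K)^+ - (xe^{Y_{\tau'}}-K)^+|$. Using \citelem{lem:ineqs}(2)(3), the latter is dominated by $Cx(e^{Y_\tau} + e^{Y_{\tau'}})\,|Y_\tau - Y_{\tau'}|$, and the boundedness of $\Delta\pi_\nu + b - \gamma - \tfrac{1}{2}$ yields the pointwise bound $|Y_\tau - Y_{\tau'}| \leq C(t-s) + |\overline{W}_\tau - \overline{W}_{\tau'}|$.

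The central step is the Brownian-increment estimate. Since $\tau \vee (T-t) \leq T-s$ is a bounded stopping time, optional sampling applied to the martingale $(\overline{W}_u - \overline{W}_{T-t})^2 - (u - (T-t))$ for $u \geq T-t$ gives
\[
\BE{|\overline{W}_\tau - \overline{W}_{\tau'}|^2} = \BE{(\tau - (T-t))^+} \leq t - s,
\]
so by Jensen $\BE{|\overline{W}_\tau - \overline{W}_{\tau'}|} \leq (t-s)^{1/2}$. Combining this with the uniform exponential-moment bound $\BE{e^{pY_\tau}} \leq C(p,T)$ for stopping times $\tau \leq T$ (which follows from the boundedness of the drift of $Y$ together with optional sampling applied to the exponential martingale $e^{p\overline{W}_u - p^2 u/2}$), and one Cauchy-Schwarz step to separate the $e^{Y_\cdot}$ factors from $|Y_\tau - Y_{\tau'}|$, one concludes
\[
u(x,\pi,s) - u(x,\pi,t) \leq Cx\big[(t-s) + (t-s)^{1/2}\big] \leq Cx(t-s)^{1/2}
\]
on $[0,T]$.

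The main obstacle I anticipate is the careful bookkeeping around the random stopping time $\tau$: I must keep all the exponential moments uniform in $\tau \leq T$ and apply optional sampling cleanly to $\tau \vee (T-t)$ rather than to a deterministic time. The $(t-s)^{1/2}$ rate is a direct reflection of the Brownian quadratic variation in $Y$, and, in contrast with the Lipschitz estimates in $x$ and $\pi$ obtained in the previous lemmas, a linear rate in $t-s$ cannot be expected.
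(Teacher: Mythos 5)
Your proof is correct and follows essentially the same route as the paper's: truncate $\tau'=\tau\wedge(T-t)$, reduce to $\BE{\left|e^{Y_{\tau}}-e^{Y_{\tau'}}\right|}$ via \citelem{lem:ineqs}, and control $\BE{(\overline{W}_{\tau}-\overline{W}_{\tau'})^{2}}=\BE{\tau-\tau'}\leq t-s$ by optional sampling combined with Cauchy--Schwarz and a uniform exponential-moment bound. The only cosmetic difference is that the paper absorbs the discount factor $e^{(\gamma-r)\tau}$ into $Y$ and uses $\gamma>r$ to replace $Ke^{(\gamma-r)\tau}$ by the smaller $Ke^{(\gamma-r)\tau'}$, whereas you split off the $\alpha$-variation as a separate $O(t-s)$ term; both yield the stated $Cx(t-s)^{1/2}$ bound.
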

\begin{proof}
Denote 
\begin{align*}
Y_{t}=\int_{0}^{t}(\Delta\pi_{\nu}+b-r-\tfrac{1}{2})\dnu+\overline{W}_{t}\leq (a-r-\tfrac{1}{2})t+\overline{W}_{t}.
\end{align*}
For any $\tau\in\sett_{[0,T-s]}$, set $\tau'=\tau\wedge (T-t)\in\sett_{[0,T-t]}$. Then $0\leq \tau-\tau'\leq t-s$ and 
\begin{align*} 
e^{(\gamma-r)\tau}(X_{\tau}-K)^{+}=( e^{(\gamma-r)\tau}X_{\tau}-K e^{(\gamma-r)\tau})^{+}\leq (X_{0}e^{Y_{\tau}}-K e^{(\gamma-r)\tau'})^{+}.
\end{align*} 
Using \citelem{lem:ineqs}, we have 
\begin{align*} 
0&\leq u(x,\pi,s)-u(x,\pi,t)\\
&\leq \sup_{\tau\in\sett_{[0,T-s]}}\;\BE{e^{(\gamma-r)\tau}(X_{\tau}-K)^{+}-e^{(\gamma-r)\tau'}(X_{\tau'}-K)^{+}\big| X_{0}=x,\pi_{0}=\pi }\\
&\leq\sup_{\tau\in\sett_{[0,T-s]}}\;\BE{(xe^{Y_{\tau}}-Ke^{(\gamma-r)\tau'})^{+}-(xe^{Y_{\tau'}}-Ke^{(\gamma-r)\tau'})^{+}\big| \pi_{0}=\pi }\\
&\leq x\sup_{\tau\in\sett_{[0,T-s]}}\;\BE{\big|e^{Y_{\tau}}-e^{Y_{\tau'}}\big|\Big|\pi_{0}=\pi }\\
&\leq x\sup_{\tau\in\sett_{[0,T-s]}}\;\BE{\big(e^{Y_{\tau}}+e^{Y_{\tau'}}\big)\big|Y_{\tau}-Y_{\tau'}\big|\Big|\pi_{0}=\pi }\\
&\leq x\sup_{\tau\in\sett_{[0,T-s]}}\;\left(\BE{\big(e^{Y_{\tau}}+e^{Y_{\tau'}}\big)^{2}\Big|\pi_{0}=\pi }\right)^{1/2}
\sup_{\tau\in\sett_{[0,T-s]}}\;\left(\BE{\big(Y_{\tau}-Y_{\tau'}\big)^{2}\Big|\pi_{0}=\pi }\right)^{1/2}\\
&\leq Cx \left(\sup_{\tau\in\sett_{[0,T-s]}}\;\BE{\left(\int_{\tau'}^{\tau}(\Delta\pi_{\nu}+b-r-\tfrac{1}{2})\dnu+\overline{W}_{\tau}-\overline{W}_{\tau'}\right)^{2}\Big|\pi_{0}=\pi }\right)^{1/2}\\
&\leq Cx \left(\sup_{\tau\in\sett_{[0,T-s]}}\;\BE{\left(C(t-s)+|\overline{W}_{\tau}-\overline{W}_{\tau'}|\right)^{2} }\right)^{1/2}\\ 
&\leq Cx \left((t-s)^{2}+\sup_{\tau\in\sett_{[0,T-s]}}\;\BE{\left(\overline{W}_{\tau}-\overline{W}_{\tau'}\right)^{2}-(\tau-\tau') }+\sup_{\tau\in\sett_{[0,T-s]}}\;\BE{\tau-\tau'}\right)^{1/2}\\
&\leq Cx(t-s)^{1/2}.
\end{align*} 
\end{proof}
Since the monotonic function is differentiable almost everywhere, the above lemmas lead to
\begin{coro}
The value function $u(x, \pi, t)$ is continuous in $\overline{\seta}:=[0,\infty)\times[0,1]\times[0,T]$. The partial derivatives $u_{x}$, $u_{\pi}$ and $u_{t}$ exist almost everywhere in $\seta$; and uniformly in $\overline{\seta}$, we have
\begin{gather*}
0\leq u_{x}\leq C,\\
0\leq u_{\pi}\leq Cx,\\
u_{t}\leq 0,
\end{gather*}
for some constant $C>0$. Moreover, $|u_{t}|$ is integrable in any bounded domain of $\seta$. 
\end{coro}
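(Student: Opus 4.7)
The plan is to deduce the corollary directly from the quantitative estimates in the preceding lemmas together with the monotonicity of $u$ in each variable (\citelem{umono}) and the classical Lebesgue theorem on the almost-everywhere differentiability of monotone functions; no fresh stochastic analysis is needed, only a consolidation step.

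I would first establish joint continuity of $u$ on $\overline{\seta}$. The three continuity lemmas supply uniform Lipschitz continuity in $x$ with constant $e^{(a-r)^{+}T}$ on any bounded $x$-interval, uniform Lipschitz continuity in $\pi$ with constant proportional to $x(T-t)$, and uniform H\"older-$\tfrac{1}{2}$ continuity in $t$ with constant proportional to $x$. Splitting the full increment through a standard triangle inequality
\begin{multline*}
|u(x_1,\pi_1,t_1)-u(x_2,\pi_2,t_2)|\leq |u(x_1,\pi_1,t_1)-u(x_2,\pi_1,t_1)|\\
+|u(x_2,\pi_1,t_1)-u(x_2,\pi_2,t_1)|+|u(x_2,\pi_2,t_1)-u(x_2,\pi_2,t_2)|,
\end{multline*}
each piece is controlled by one of the three lemmas, which gives joint continuity on any bounded subdomain. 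The uniformity of the bounds permits extension by continuity to the boundary points $x=0$, $\pi\in\{0,1\}$ and $t=T$, yielding continuity on all of $\overline{\seta}$.

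For the existence and pointwise bounds on the partial derivatives: by \citelem{umono}, each one-variable slice of $u$ is monotone, hence differentiable at almost every point of its interval by Lebesgue's theorem. Applying Fubini's theorem to each pair of transverse variables promotes these one-dimensional a.e.\ statements into three-dimensional a.e.\ existence of $u_{x}$, $u_{\pi}$ and $u_{t}$ throughout $\seta$. Passing to the limit in the corresponding difference quotients, controlled by \citelem{lem:uxbound} and \citelem{lem:upibound}, yields $0\leq u_{x}\leq e^{(a-r)^{+}T}=:C$ and $0\leq u_{\pi}\leq C\Delta x(T-t)\leq Cx$, while $u_{t}\leq 0$ follows directly from the monotonicity in $t$.

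Finally, for the integrability of $|u_{t}|$ on an arbitrary bounded domain $D\subset \seta$, choose $M<\infty$ and $T'<T$ with $D\subset(0,M]\times[0,1]\times[0,T']$. Since $t\mapsto u(x,\pi,t)$ is non-increasing, the classical fundamental theorem for monotone functions gives, for each fixed $(x,\pi)$,
\[\int_{0}^{T'}|u_{t}(x,\pi,t)|\,\dt\leq u(x,\pi,0)-u(x,\pi,T')\leq u(x,\pi,0)\leq Me^{(a-r)^{+}T},\]
and an application of Fubini's theorem then yields $|u_{t}|\in L^{1}(D)$. The only delicate point in the whole argument is the appeal to Lebesgue's differentiation theorem, which is inherently one-dimensional, so the three-dimensional a.e.\ existence of the partial derivatives must be extracted via Fubini rather than directly; there is no deeper obstacle.
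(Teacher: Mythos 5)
Your proposal is correct and follows essentially the same route as the paper, which disposes of this corollary with the single remark that monotone functions are differentiable almost everywhere and that the preceding lemmas then yield the claims; you have simply filled in the standard details (triangle-inequality splitting for joint continuity, Lebesgue plus Fubini for a.e.\ differentiability, and the monotone fundamental-theorem bound plus Fubini for the integrability of $|u_{t}|$). Nothing in your elaboration deviates from or adds to the paper's intended argument.
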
 

\section{Variational inequality, and the boundary cases}
\subsection{Degenerate variational inequality}
\noindent
Applying the dynamic programming principle, we see that the value function satisfies the variational inequality 
\begin{align}\label{pu}
\begin{cases}
\min\big\{{-}\BL u,\; u-(x-K)^+\big\}=0, \quad (x, \pi, t)\in\seta;\\
u(x,\pi, T)=(x-K)^+,
\end{cases}
\end{align}
where
\begin{align*}
\BL u&:=u_t+\tfrac{1}{2} x^2u_{xx}+\tfrac{1}{2} \Delta^2\pi^2(1-\pi)^2u_{\pi\pi}+\Delta\pi(1-\pi)x u_{x\pi}\\
&\quad\;+( \Delta\pi+b-\gamma)x u_x+(\gamma-r)u.
\end{align*}
We will thoroughly study the free boundary of this variational inequality in the subsequent sections.
\par
The operator $\BL$ in \eqref{pu} is a parabolic operator, which is degenerate in the entire domain $\seta$. The regularity of $u(x,\pi,t)$ is not good enough to let \eqref{pu} hold almost everywhere. It holds only in weak sense, or in viscosity sense. The definition of weak solution can be found in \cite{F82} and \cite{CYW11}, and the definition of viscosity solution can be found in \cite{YZ99}.
\par
The usual way to get a weak solution for \eqref{pu} is regularization: that is, to add a term $\varepsilon u_{\pi\pi}$ to $\BL u$ for $\varepsilon>0$. One can first show there is a ``good'' $u_\varepsilon$ that solves
\begin{align}\label{puep}
\begin{cases}
\min\big\{{-}(\BL+\varepsilon\p_{\pi\pi}) u_\varepsilon,\; u_\varepsilon-(x-K)^+\big\}=0, \quad (x, \pi, t)\in\seta;\\
u_\varepsilon(x,\pi, T)=(x-K)^+,
\end{cases}
\end{align}
where suitable boundary conditions should be put on the boundaries $\pi=0, 1$. Then, to show the limit $\lim_{\varepsilon\to 0+}u_{\varepsilon}$ solves \eqref{pu} in the weak sense. In doing so, the difficulty is to get estimates, just like in lemmas 3.2-3.6 above, for $u_\varepsilon$ uniformly with respect to $\varepsilon$.
This is a long way (see \cite{CYW11}, where the regularity of the value function is the same as in the lemmas 3.2-3.6 in this paper).

\begin{remark}[On boundary conditions] 
Since the operator $\BL$ in \eqref{pu} is degenerate, according to the Fichera Theorem \cite{OR73}, boundary conditions for \eqref{pu} must not be put on $x=0$, $\pi=0$, or $\pi=1$. These boundary cases would not happen if they are initially not so. A discussion can be found in \cite{DZZ10}.
On the other hand, we can determine the boundary values on a priority basis.
\begin{itemize}
\item Let $x=0$ in \eqref{pu}, then $u(0,\pi,t)$ is the solution of the system
\begin{align*}
\begin{cases}
\min\big\{{-}u_t(0,\pi,t)-\tfrac{1}{2} \Delta^2\pi^2(1-\pi)^2u_{\pi\pi}(0,\pi,t)+ru(0,\pi,t),\; u(0,\pi,t)\big\}=0,\\
\hfill (\pi,t)\in (0,1)\times[0,T);\\
u(0,\pi, T)=0.
\end{cases}
\end{align*}
Obviously, $u(0,\pi,t)\equiv 0$ is the unique solution for this problem.

\item
Let $\pi=0$ in \eqref{pu}, then $u(x,0,t)$ is the solution for the system
\begin{align}\label{pi=0}
\begin{cases}
\min\big\{{-}u_t(x,0,t)-\tfrac{1}{2} x^2u_{xx}(x,0,t)-( b-\gamma)x u_{x}(x,0,t)+ru(x,0,t),\\
\hfill u(x,0,t)-(x-K)^+\big\}=0, \quad (x,t)\in (0,+\infty)\times [0,T);\\
u(x,0, T)=(x-K)^+.
\end{cases}
\end{align}

\item
Let $\pi=1$ in \eqref{pu}, then $u(x,1,t)$ is the solution for the system
\begin{align}\label{pi=1}
\begin{cases}
\min\big\{{-}u_t(x,1,t)-\tfrac{1}{2} x^2u_{xx}(x,1,t)-(a-\gamma)x u_{x}(x,1,t)+ru(x,1,t),\\
\hfill u(x,1,t)-(x-K)^+\big\}=0, \quad (x,t)\in (0,+\infty)\times [0,T);\\
u(x,1, T)=(x-K)^+.
\end{cases}
\end{align}
\end{itemize}
The problems \eqref{pi=0} and \eqref{pi=1} are free of drift uncertainty, and they have been thoroughly studied by Dai and Xu \cite{DX11} under the assumptions of $b\leq r$ and $a\leq r$, respectively.
\end{remark}

\subsection{The boundary cases: $\pi=0,1$}
\noindent
Before studying \eqref{pu}, it is necessary to know the situation on the boundaries $\pi=0,1$.
These two cases are similar: one only needs to interchange $a$ and $b$ in any subsequent analysis in this section. Thus, we assume $\pi=0$ for the rest of this section. As a consequence, there is no drift uncertainty and the model reduces to the classical Black-Scholes one.
\par
Let $\pi=0$ in \eqref{pu}, then $u_0(x,t):=u(x,0,t)$ is the solution of 
\begin{align}\label{u0}
\begin{cases}
\min\Big\{{-}\BL^0 u_0,\quad u_0-(x-K)^+\Big\}=0, & (x,t)\in \seta_{0};\\
u_0(x, T)=(x-K)^+,
\end{cases}
\end{align}
where
\begin{align*}
\BL^0 u &:=u_t+\tfrac{1}{2} x^2u_{xx}+(b-\gamma)x u_x+(\gamma-r)u,\\
\seta_{0}&:=(0,+\infty)\times [0,T).\nn
\end{align*}

\begin{remark}
According to the calculation in \eqref{no} below, there is no free boundary if $b\geq \gamma$.
\end{remark}
As $\gamma>r$, we only need to consider the following two cases: $\gamma>r> b $, and $\gamma> b \geq r$.
\par
In \cite{DX11}, the problem \eqref{u0} was thoroughly studied in the case of $r>b $. They have especially shown that there exists a non-increasing redeeming boundary $X_0(t)$ with the terminal value
\begin{align*}
X_0(T):=\lim_{t\to T}X_{0}(t)= \max\Big\{K,\;\frac{r-\gamma}{r-b}K\Big\}.
\end{align*}
\par
Now, we deal with the case of $\gamma> b \geq r$.
Define two free boundaries for $t\in[0,T)$, 
\begin{align*}
X_1(t) &:=\inf\{x>0\mid u_0(x,t)=(x-K)^+\}, \\
X_2(t) &:=\sup\{x>0\mid u_0(x,t)=(x-K)^+\},
\end{align*}
with the convention that $X_{1}(T):=\lim_{t\to T}X_1(t)$, $X_{2}(T):=\lim_{t\to T}X_2(t)$, and $\sup \emptyset=+\infty$. We say that $X_{1}(t)$ disappears if the set $\{x>0\mid u_0(x,t)=(x-K)^+\}$ is empty; $X_{2}(t)$ disappears if it is $+\infty$. Clearly, we have 
\begin{align*}
\{(x,t)\in\seta_{0}\mid u_{0}=(x-K)^+\}=\{(x,t)\in\seta_{0}\mid X_1(t)\leq x \leq X_2(t)\},
\end{align*}
by the continuity of the value function. 
\begin{lemma} If $\gamma> b > r$, then $X_1(t)$ is strictly decreasing and $X_2(t)$ is strictly increasing with the terminal values $X_1(T)=K$ and $X_2(T)=\tfrac{\gamma-r}{b-r}K$. 
\begin{figure}[H]
\begin{center}
\begin{picture}(350,170)
\thinlines
\put(80,40){\vector(1,0){190}}
\put(100,20){\vector(0,1){135}}
\put(100,120){\line(1,0){150}}
\qbezier(130,120)(155,100)(160,40)
\qbezier(220,120)(185,100)(180,40)
\put(258,28){$x$}
\put(92,144){$t$}
\put(88,117){$T$}
\put(125,127){$K$}
\put(126,117){$\bullet$}

\put(120,63){$X_1(t)$}
\put(135,73){\vector(1,1){10}}
\put(190,63){$X_2(t)$}
\put(203,73){\vector(-1,1){10}}
\put(205,129){$\frac{\gamma-r}{b-r}K$}
\put(218,117){$\bullet$}

\put(115,95){$\setc$}
\put(225,95){$\setc$}
\put(165,95){$\setr$}
\end{picture} \vspace{-15pt}
\caption{$\pi=0$ and $\gamma> b > r$.}
\label{fig:1}
\end{center} \vspace{-15pt}
\end{figure}
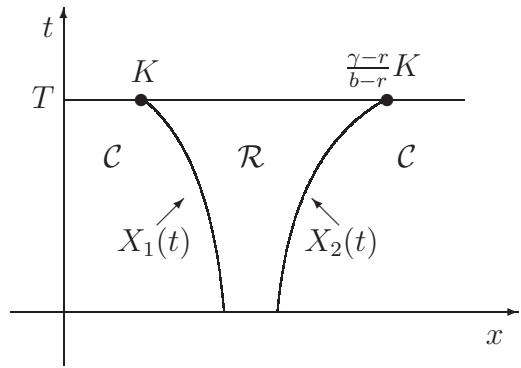
\end{lemma}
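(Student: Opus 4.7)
My plan is to prove the lemma in three stages: terminal values, weak monotonicity, and upgrading to strict monotonicity.

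\textbf{Terminal values.} Applying It\^o's formula to $s\mapsto e^{(\gamma-r)(s-t)}(X_s-K)$ yields, after taking expectations,
\[
\BE{e^{(\gamma-r)(\tau-t)}(X_\tau-K)\,\big|\,X_t=x}=(x-K)+\BE{\int_t^\tau e^{(\gamma-r)(s-t)}\bigl[(b-r)X_s-(\gamma-r)K\bigr]\ds\,\Big|\,X_t=x},
\]
so the instantaneous drift $(b-r)x-(\gamma-r)K$ changes sign at $x_0:=\frac{\gamma-r}{b-r}K$, and $x_0>K$ since $\gamma>b$. For $x<K$, $(x-K)^+=0<u_0(x,t)$, giving $X_1(t)\ge K$. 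For $x>x_0$, choosing $\tau=(t+h)\wedge T$ with small $h>0$ makes the drift integral strictly positive, and since $(X_\tau-K)^+\ge X_\tau-K$, we obtain $u_0(x,t)>x-K$, so $X_2(t)\le x_0$. For $x\in(K,x_0)$ and $t$ near $T$, the function $f(x,t):=x-K$ satisfies $-\BL^0 f=(\gamma-r)K-(b-r)x>0$ and matches the obstacle on $\{x\ge K\}$, so $f$ is a supersolution of \eqref{u0} and a localized comparison forces $u_0=f$ in a neighborhood of $(x,T)$, yielding $X_1(T)\le K$ and $X_2(T)\ge x_0$. Combining, $X_1(T)=K$ and $X_2(T)=\frac{\gamma-r}{b-r}K$.

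\textbf{Weak monotonicity.} The bound $u_t\le 0$ from the preceding corollary implies that if $t_1<t_2$ and $u_0(x,t_1)=(x-K)^+$, then $(x-K)^+\le u_0(x,t_2)\le u_0(x,t_1)=(x-K)^+$, so $u_0(x,t_2)=(x-K)^+$ too. Thus the stopping set $\{x:u_0(x,t)=(x-K)^+\}$ is nondecreasing in $t$ under inclusion, whence $[X_1(t_1),X_2(t_1)]\subseteq[X_1(t_2),X_2(t_2)]$, so $X_1$ is non-increasing and $X_2$ non-decreasing.

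\textbf{Strict monotonicity.} Arguing by contradiction, suppose $X_1(t_1)=X_1(t_2)=x^*$ for some $t_1<t_2$; weak monotonicity then gives $X_1\equiv x^*$ on $[t_1,t_2]$. Along $\Gamma:=\{x^*\}\times(t_1,t_2)$, $u_0(x^*,t)\equiv x^*-K$, so $u_t\equiv 0$ on $\Gamma$; the $C^1$-smooth fit across the free boundary yields $u_x\equiv 1$ on $\Gamma$, and differentiating in $t$ gives $u_{xt}\equiv 0$ on $\Gamma$. On the other hand, $v:=u_t$ satisfies $v\le 0$ and the parabolic equation $\BL^0 v=0$ in the left continuation region $\{x<X_1(t)\}$. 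Since for $x<K$ and $t$ near $T$, $u_0(x,t)>0=u_0(x,T)$, $v$ is not identically zero there; the parabolic strong maximum principle (applied after time reversal) then shows $v<0$ strictly in the interior. Hopf's boundary point lemma at any $(x^*,t_0)\in\Gamma$, with outward normal in the $+x$-direction, yields $v_x(x^{*-},t_0)>0$, contradicting $u_{xt}\equiv 0$. The same argument handles $X_2$, using the right continuation region with outward normal in the $-x$-direction.

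The main technical obstacle is justifying the $C^1$-smooth fit and the applicability of Hopf's lemma rigorously, since both rest on sufficient regularity of $u_0$ at the free boundary. These are standard for the classical non-degenerate American-option variational inequality at $\pi=0$, but still require careful verification within the present framework.
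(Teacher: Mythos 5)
Your bounds $X_1(t)\ge K$ and $X_2(t)\le\tfrac{\gamma-r}{b-r}K$ are fine (the Dynkin-formula computation is just the probabilistic form of the paper's one-line calculation $-\BL^0(x-K)=(r-b)x+(\gamma-r)K$), the weak-monotonicity step is exactly the paper's, and your Hopf-lemma argument for strict monotonicity is the standard fleshed-out version of what the paper invokes in a single sentence, modulo the regularity caveats you rightly flag.

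The genuine gap is the terminal-value step. The ``localized comparison'' with $f(x,t)=x-K$ does not go through: to conclude $u_0\le f$ on a subdomain $D=(K,x_0)\times(T-\varepsilon,T)$ (or any neighbourhood of $(x,T)$) from the supersolution property of $f$ inside $D$, you must also have $f\ge u_0$ on the \emph{lateral} parabolic boundary of $D$, and this fails: $u_0(K,t)>0=f(K,t)$ for every $t<T$, and there is no upper control of $u_0$ by $f$ at $x=x_0$ either. Matching the obstacle plus being a supersolution of the PDE in the interior only reproduces $u_0\ge f$, which you already know. Worse, no time-independent barrier can prove this claim in general: by the second bullet of Proposition 4.4, when $\gamma<b+\tfrac12+\sqrt{2b-2r}$ the coincidence set is empty for $t$ far from $T$, so ``$u_0=x-K$ near $(x,T)$'' is intrinsically a near-expiry statement and any proof must exploit the smallness of $T-t$. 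The paper's argument does exactly this and you already have all its ingredients: if $X_1(T)>K$, then by the monotonicity of $X_1$ the strip $(K,X_1(T))\times(T-\varepsilon,T)$ lies in $\setc$, where $\BL^0u_0=0$; letting $t\to T$ and inserting the terminal data $u_0(\cdot,T)=x-K$ (so $u_{xx}\to0$, $u_x\to1$) yields $u_t(x,T^-)=(r-b)x+(\gamma-r)K$, which equals $(\gamma-b)K>0$ at $x=K$ and hence is positive for $x$ near $K$, contradicting $u_t\le0$. The symmetric argument on $(X_2(T),x_0)$, where $(r-b)x+(\gamma-r)K>0$ for all $x<x_0$, gives $X_2(T)=x_0$. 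Replace your comparison step with this and the proof is complete.
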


\begin{proof}
If
\begin{align*}
-\BL^0 (x-K)&=(r-b)x+(\gamma-r)K\geq 0,
\end{align*}
then $x\leq \tfrac{\gamma-r}{b-r}K$, so we have
\begin{align*}
\{(x,t)\in\seta_{0}\mid u_{0}=(x-K)^+\}\subseteq\{(x,t)\in\seta_{0}\mid K\leq x\leq \tfrac{\gamma-r}{b-r}K\}.
\end{align*}

Since $u_0$ is non-increasing in $t$, we see that $X_{1}(t)$ is non-increasing and $X_{2}(t)$ is non-decreasing. Moreover, we can prove that, by the Hopf lemma in PDE, any one of them, if they exist, is strictly monotonic. 
\par
Now, let us prove $X_1(T)=K$. First, we have $X_1(T)\geq K$. If $X_1(T)>K$, then for sufficient small $\varepsilon>0$
and $(x,t)\in (K,X_1(T))\times (T-\varepsilon,T)$,
\begin{align*}
{-}\BL^0 u_0=0,\quad u_0(x, T)=x-K,
\end{align*}
hence
\begin{align*}
\p_t u_0(x,t)=-(b-r)x+(\gamma-r)K,
\end{align*}
which is positive at $(x,t)=(K,T)$, contradicting $\p_t u_0\leq 0$. Thus, $X_1(T)=K$. In the same way, $X_2(T)=\frac{\gamma-r}{b-r}K$
can be proved.
\end{proof}

\begin{prop}\label{gamm>b-delta>r}
In the case of $\gamma> b > r$, we have
\begin{itemize}
\item If $\gamma \geq b+\tfrac{1}{2}+\sqrt{2b-2r}$,
then the two redeeming boundaries $X_1(t)$ and $X_2(t)$ exist and 
\begin{align}\label{freeboundarybounds}
X_{1}(t)<\left(1+\tfrac{1}{\sqrt{2b-2r}}\right)K<X_{2}(t)
\end{align} 
for all $0\leq t\leq T$. 
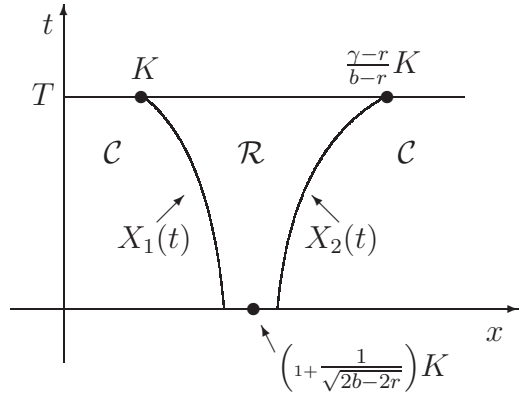
\begin{figure}[H]
\begin{center}
\begin{picture}(350,170)
\thinlines
\put(80,40){\vector(1,0){190}}
\put(100,20){\vector(0,1){135}}
\put(100,120){\line(1,0){150}}
\qbezier(130,120)(155,100)(160,40)
\qbezier(220,120)(185,100)(180,40)
\put(258,28){$x$}

\put(180,15){${\scriptscriptstyle{\big(1+\tfrac{1}{\sqrt{2b-2r}}\big)}}K$}
\put(179,24){\vector(-1,2){5}}
\put(168,37){$\bullet$}

\put(92,144){$t$}
\put(88,117){$T$}
\put(125,127){$K$}
\put(126,117){$\bullet$}

\put(120,63){$X_1(t)$}
\put(135,73){\vector(1,1){10}}
\put(190,63){$X_2(t)$}
\put(203,73){\vector(-1,1){10}}
\put(205,129){$\frac{\gamma-r}{b-r}K$}
\put(218,117){$\bullet$}

\put(115,95){$\setc$}
\put(225,95){$\setc$}
\put(165,95){$\setr$}
\end{picture} \vspace{-5pt}
\caption{$\pi=0$, $ \gamma \geq b+\tfrac{1}{2}+\sqrt{2b-2r}$ and $b > r$.}
\label{fig:1-1}
\end{center} \vspace{-15pt}
\end{figure}

\item If $\gamma< b+\tfrac{1}{2}+\sqrt{2b-2r}$,
then there exists $\ell<T$ such that 
the two redeeming boundaries $X_1(t)$ and $X_2(t)$ exist for $T-\ell\leq t\leq T$, intersect at $T-\ell$; and both of them disappear for $t<T-\ell$.\footnote{Here, $\ell$ does not depend on $T$. Further, an upper bound for $\ell$ can be explicitly given as shown in the proof.}

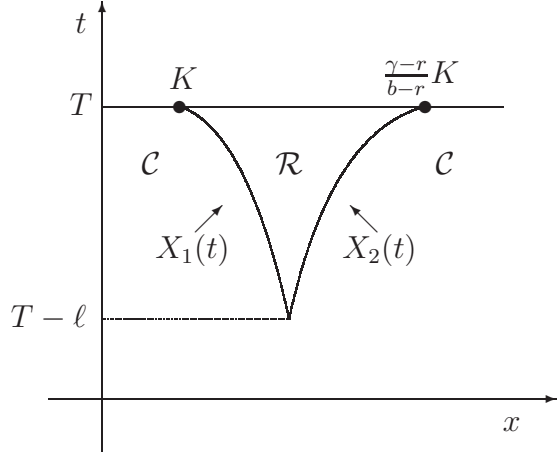
\begin{figure}[H]
\begin{center}
\begin{picture}(350,170)
\thinlines
\put(80,10){\vector(1,0){190}}
\put(100,-10){\vector(0,1){170}}
\put(100,120){\line(1,0){150}}
\qbezier(130,120)(155,110)(170,40)
\qbezier(220,120)(185,110)(170,40)
\put(250,-2){$x$}
\put(90,148){$t$}
\put(88,117){$T$}
\put(66,37){$T-\ell$}
\multiput(100,40)(1,0){70} {\line(1,0){0.4}}

\put(125,127){$K$}
\put(126,117){$\bullet$}

\put(120,63){$X_1(t)$}
\put(135,73){\vector(1,1){10}}
\put(190,63){$X_2(t)$}
\put(203,73){\vector(-1,1){10}}
\put(205,129){$\frac{\gamma-r}{b-r}K$}
\put(218,117){$\bullet$}

\put(115,95){$\setc$}
\put(225,95){$\setc$}
\put(165,95){$\setr$}
\end{picture}\vspace{10pt}
\caption{$\pi=0$ and $r< b < \gamma< b+\tfrac{1}{2}+\sqrt{2b-2r}$.}
\label{fig:2}
\end{center}\vspace{-24pt}
\end{figure}

\end{itemize}
\end{prop}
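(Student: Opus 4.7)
The plan is to compare $u_0$ with the value of the associated perpetual (infinite-horizon) problem, which reduces to the Euler equation $\tfrac12 x^2 v''+(b-\gamma)xv'+(\gamma-r)v=0$ obtained by dropping the time derivative in $\BL^0$. Substituting $v(x)=x^\alpha$ gives the characteristic equation
\[
\alpha^2-(2(\gamma-b)+1)\alpha+2(\gamma-r)=0,
\]
whose discriminant $(2(\gamma-b)+1)^2-8(\gamma-r)$, viewed as a quadratic in $\gamma$, has larger root exactly $b+\tfrac12+\sqrt{2(b-r)}$. Thus the roots $\alpha_-\leq\alpha_+$ are real precisely under the hypothesis of Part~1; and in that case evaluating the polynomial at $\alpha=1$ gives $2(b-r)>0$ while the parabola's vertex sits at $\gamma-b+\tfrac12>1$, so both roots strictly exceed~$1$.

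For Part~1 I would assemble a time-independent candidate
\[
\Phi(x)=\begin{cases}C_+\,x^{\alpha_+}, & 0<x<\bar X_1,\\ x-K, & \bar X_1\leq x\leq\bar X_2,\\ C_-\,x^{\alpha_-}, & x>\bar X_2,\end{cases}
\]
with $C_\pm$ and $\bar X_1\leq\bar X_2$ pinned down by value-matching and smooth-fit at the two thresholds. These yield $\bar X_1=K\alpha_+/(\alpha_+-1)$ and $\bar X_2=K\alpha_-/(\alpha_--1)$. Since $\alpha\mapsto\alpha/(\alpha-1)$ is strictly decreasing on $(1,\infty)$, a direct evaluation of the characteristic polynomial at $\alpha=1+\sqrt{2(b-r)}$ gives $2\sqrt{2(b-r)}\bigl[\sqrt{2(b-r)}-(\gamma-b-\tfrac12)\bigr]$, which vanishes at $\gamma=\gamma^*:=b+\tfrac12+\sqrt{2(b-r)}$ and is strictly negative above it, forcing $\alpha_-\leq 1+\sqrt{2(b-r)}\leq\alpha_+$, hence $\bar X_1\leq x^*:=(1+1/\sqrt{2(b-r)})K\leq\bar X_2$ (strict once $\gamma>\gamma^*$, and collapsing to the single point $x^*$ at $\gamma=\gamma^*$). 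I would then verify (i) $\Phi\geq(x-K)^+$ on $(0,\infty)$, which follows from convexity of each power piece together with the smooth fit, and (ii) $-\BL^0\Phi\geq 0$ on $(0,\infty)$, automatic on the continuation pieces and on $[\bar X_1,\bar X_2]$ reducing to $(r-b)x+(\gamma-r)K\geq 0$, a consequence of $\bar X_2\leq(\gamma-r)K/(b-r)$. Since $\Phi$ is $C^1$ and piecewise $C^2$ with continuous derivative, It\^o's formula applies without local-time correction at $\bar X_i$, so $e^{(\gamma-r)(s-t)}\Phi(X_s)$ is a supermartingale on $[t,T]$, giving
\[
\BE{e^{(\gamma-r)(\tau-t)}(X_\tau-K)^+\;\big|\; X_t=x^*}\leq\BE{e^{(\gamma-r)(\tau-t)}\Phi(X_\tau)\;\big|\; X_t=x^*}\leq\Phi(x^*)=x^*-K.
\]
Taking the supremum over $\tau\in\sett_{[t,T]}$ and combining with $u_0\geq(x-K)^+$ yields $u_0(x^*,t)=x^*-K$, i.e.\ $(x^*,t)\in\setr$ for every $t\in[0,T]$. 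The strict inequalities $X_1(t)<x^*<X_2(t)$ then follow from the strict monotonicity of $X_1, X_2$ established in the preceding lemma, combined with the terminal values $X_1(T)=K<x^*<X_2(T)=(\gamma-r)K/(b-r)$.

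For Part~2, when $\gamma<\gamma^*$ the construction above fails, and I would instead lower-bound $u_0(x,t)$ by the value of the suboptimal deterministic strategy $\tau=T$, which is the classical Black--Scholes-type expression
\[
u_0(x,t)\geq xe^{(b-r)(T-t)}N(d_1)-Ke^{(\gamma-r)(T-t)}N(d_2),
\]
where $d_{1,2}=(\log(x/K)+(b-\gamma\pm\tfrac12)(T-t))/\sqrt{T-t}$ and $N$ is the standard normal cdf. A Mills-ratio asymptotic shows that the first term grows exponentially in $T-t$ at rate $\min\{b-r,\,(b-r)-(\gamma-b-\tfrac12)^2/2\}>0$ (the latter rate being strictly positive precisely when $\gamma<\gamma^*$), and that it dominates the second term by an amount growing without bound on the compact a-priori redeeming band $[K,(\gamma-r)K/(b-r)]$ supplied by the preceding lemma. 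Hence there is an explicit threshold $\ell>0$, depending only on $r,b,\gamma,K$, such that for $T-t>\ell$ the lower bound strictly exceeds $x-K$ for every $x$ in the band, forcing the redeeming region at time $t$ to be empty. The intersection of $X_1$ and $X_2$ at $T-\ell$ then follows from the continuity of $u_0$ and the strict monotonicity of $X_1, X_2$.

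The principal difficulty is the global verification in Part~1 that $\Phi$ dominates $(x-K)^+$ and is a supersolution on the whole of $(0,\infty)$: this requires a careful convexity argument on each side of $[\bar X_1,\bar X_2]$ combined with the smooth-fit condition at $\bar X_i$, together with the justification that the local-time contribution at the $\bar X_i$ vanishes thanks to continuity of $\Phi'$. A secondary obstacle in Part~2 is making the exponential-growth estimate uniform in $x$ across the entire compact band, which is handled by continuity of the Mills-ratio coefficients.
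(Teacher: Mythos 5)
Your proposal is sound and reaches the right conclusions, but Part 1 takes a genuinely different route from the paper. The paper does not build the full perpetual two-boundary solution: it uses the single power $v(x)=(x/\lambda_0)^{\lambda_0}((\lambda_0-1)/K)^{\lambda_0-1}$ with $\lambda_0=1+\sqrt{2b-2r}$, which is tangent to the obstacle at the one point $x_0=\frac{\lambda_0}{\lambda_0-1}K$ and is verified to satisfy $-\BL^0 v\geq 0$ globally precisely when $\gamma\geq b+\tfrac12+\sqrt{2b-2r}$; then $(x-K)^+\leq u_0\leq v$ pins $u_0(x_0,t)=x_0-K$ with almost no further work. Your construction of $\Phi$ with smooth fit at $\bar X_1=K\alpha_+/(\alpha_+-1)$ and $\bar X_2=K\alpha_-/(\alpha_--1)$ is heavier (you must verify the obstacle inequality and the supersolution property on three pieces, and justify the Meyer--It\^o step), but it buys more: it shows the whole interval $[\bar X_1,\bar X_2]$ lies in $\setr$ for every $t$, not just the single point $x^*$. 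Two details you should tighten. First, your parenthetical that the roots are real ``precisely'' under the Part 1 hypothesis is inaccurate (the discriminant is also nonnegative for $\gamma\leq b+\tfrac12-\sqrt{2(b-r)}$, where both roots fall below $1$ and the construction fails for a different reason); and the claim $\bar X_2\leq\frac{\gamma-r}{b-r}K$ deserves its one-line proof: the characteristic polynomial evaluated at $\beta=\frac{\gamma-r}{\gamma-b}$ equals $\beta(\beta-1)>0$ with $\beta$ to the left of the vertex, so $\alpha_->\beta$, which is equivalent to the desired bound. Second, in Part 2 your statement that the first Black--Scholes term ``dominates the second by an amount growing without bound'' is delicate, because the two terms $xe^{(b-r)\theta}N(d_1)$ and $Ke^{(\gamma-r)\theta}N(d_2)$ in fact share the same exponential rate $(b-r)-\tfrac12(\gamma-b-\tfrac12)^2$ when $\gamma>b+\tfrac12$; the clean route (the one the paper takes) is to bound $\p_x g(x,t)=e^{(b-r)\theta}N(d_1)\geq e^{(b-r)\theta}N(-(\gamma-b-\tfrac12)\sqrt{\theta})\geq 1$ for $\theta$ large via the Mills ratio, and then integrate from $K$: $g(x,t)>g(x,t)-g(K,t)\geq x-K$. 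With that substitution your Part 2 coincides with the paper's argument.
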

We would like to present a financial interpretation of this result before presenting its proof. 
\par
In the first case, the loan rate is relatively high. The borrower should redeem the stock when its price increases to the redeeming boundary $X_1(t)$ because the loan rate is too high to wait. If the stock price is higher than $X_2(t)$, it seems that one should wait for its price to come down to $X_2(t)$ to redeem the stock. However, no client shall sign such a stock loan with the bank at the very beginning, because the principal is too little (or the loan rate is too high) when compared to the value of the collateral (namely, the stock). 
\par
In the second case, the loan rate is too low, such that the stock loan can be almost regarded as an American call option (which shall not be exercised before maturity, in keeping with Merton's theorem). Hence, the borrower of the stock loan should wait if maturity is very far. 
\begin{proof}
Set $\gamma_{0}= b+\tfrac{1}{2}+\sqrt{2b-2r}$.
\begin{itemize}
\item Suppose $\gamma\geq \gamma_{0}$. Set $\lambda_{0}=1+\sqrt{2b-2r}$, $x_{0}=\tfrac{\lambda_{0}}{\lambda_{0}-1}K$ and define 
\begin{align*} 
v(x)&=\left(\tfrac x{\lambda_{0}}\right)^{\lambda_{0}}\left(\tfrac{\lambda_{0}-1}{K}\right)^{\lambda_{0}-1},\quad \quad x>0.
\end{align*}
From $v(x_{0})=\tfrac{1}{\lambda_{0}-1}K=x_{0}-K$, $v'(x_{0})=1$, and the fact that $v$ is strictly convex as $\lambda_{0}>1$, we conclude 
\begin{align*} 
v(x) 
=(x-K)^+,&\quad \text{if $x=x_{0}$;}\\
v(x)>(x-K)^+,&\quad \text{if $x\neq x_{0}$.}
\end{align*}
Moreover, 
\begin{align*} 
-\BL^0 v&=-v_t-\tfrac{1}{2} x^2v_{xx}-(b-\gamma)x v_x-(\gamma-r)v\\
&=(-\tfrac{1}{2} \lambda_{0}(\lambda_{0}-1)-(b-\gamma)\lambda_{0}-(\gamma-r))v\\
&=(-\tfrac{1}{2} \lambda_{0}(\lambda_{0}-1)-b\lambda_{0}+\gamma(\lambda_{0}-1)+r)v\\
&\geq(-\tfrac{1}{2} \lambda_{0}(\lambda_{0}-1)-b\lambda_{0}+\gamma_{0}(\lambda_{0}-1)+r)v\\
&=(-\tfrac{1}{2} \lambda_{0}(\lambda_{0}-1)-b\lambda_{0}+(\lambda_{0}+b-\tfrac{1}{2})(\lambda_{0}-1)+r)v\\
&=(\tfrac{1}{2} (\lambda_{0}-1)^{2}-b+r)v\\
&=0.
\end{align*}
Therefore, $v$ is a super solution of \eqref{u0}, and hence
\[(x-K)^+\leq u_{0}(x,t)\leq v(x), \quad (x,t)\in\seta,\] 
from which we get $u_0(x_{0},t)=(x_{0}-K)^{+}$ for $0\leq t\leq T$. Notice \[x_{0}=\tfrac{\lambda_{0}}{\lambda_{0}-1}K=\left(1+\tfrac{1}{\sqrt{2b-2r}}\right)K,\] so \eqref{freeboundarybounds} follows from the fact that $X_1(t)$ and $X_2(t)$ are both strictly monotonic. 

\item 
Now, suppose $\gamma<\gamma_{0}$. 
We note that by \eqref{umonoexp}
\begin{align*} 
u_{0}(x,t)
&=\sup_{\tau\in\sett_{[0,T-t]}}\;\BE{e^{(\gamma-r)\tau}\left(xe^{\int_{0}^{\tau}(\Delta\pi_{\nu}+b-\gamma-\frac{1}{2})\dnu+\overline{W}_{\tau} }-K\right)^{+}\bigg| \pi_{0}=0 }\\
&=\sup_{\tau\in\sett_{[0,T-t]}}\;\BE{e^{(\gamma-r)\tau}\left(xe^{(b-\gamma-\frac{1}{2})\tau+\overline{W}_{\tau} }-K\right)^{+}}\\
&\geq \BE{e^{(\gamma-r)(T-t)}\left(xe^{(b-\gamma-\frac{1}{2})(T-t)+\overline{W}_{T-t} }-K\right)^{+}}\\
&=xe^{(b-r)(T-t)} N\left(\tfrac{\log x-\log K-(\gamma-b-\frac{1}{2})(T-t)}{\sqrt{T-t}}\right)\\ 
&\quad -Ke^{(\gamma-r)(T-t)}N\left(\tfrac{\log x-\log K-(\gamma-b+\frac{1}{2})(T-t)}{\sqrt{T-t}}\right)=:g(x,t),
\end{align*} 
where $N$ stands for the standard normal distribution. 
We note that for $x\geq K$,
\begin{align*} 
\p_{x}g(x,t)&=e^{(b-r)(T-t)} N\left(\tfrac{\log x-\log K-(\gamma-b-\frac{1}{2})(T-t)}{\sqrt{T-t}}\right)\\
&\geq e^{(b-r)(T-t)} N\left(-(\gamma-b-\tfrac{1}{2})\sqrt{T-t}\right).
\end{align*} 
If we can prove that the right hand side in the above equation is $\geq 1$ for sufficiently large $T-t$, then $g(x,t)>g(x,t)-g(K,t)\geq x-K$ for all $x\geq K$ as $g(K,t)>0$ by definition. Consequently, we deduce $u_{0}(x,t)>(x-K)^{+}$ for sufficiently large $T-t$ and the claim follows.
\begin{itemize}
\item If $\gamma\leq b+\tfrac{1}{2}$, then, $\p_{x}g(x,t)\geq e^{(b-r)(T-t)}N(0)\geq 1$ for sufficiently large $T-t$. 
\item If $b+\tfrac{1}{2}<\gamma<\gamma_{0}$, using the well-known inequality 
\[N(-x)\geq \frac{1}{\sqrt{2\pi}}\frac{x}{(x^{2}+1)}e^{-\frac{1}{2}x^{2}},\quad x\geq 0,\] 
we have for sufficiently large $T-t$, 
\begin{align*} 
\p_{x}g(x,t)&\geq e^{(b-r)(T-t)} N\left(-(\gamma-b-\tfrac{1}{2})\sqrt{T-t}\right)\\
&\geq \frac{1}{\sqrt{2\pi}}\frac{(\gamma-b-\tfrac{1}{2})\sqrt{T-t}}{(\gamma-b-\tfrac{1}{2})^{2}(T-t)+1} e^{\left(b-r-\tfrac{1}{2}(\gamma-b-\tfrac{1}{2})^{2}\right)(T-t)} \geq 1,
\end{align*} 
in view of $b-r-\tfrac{1}{2}(\gamma-b-\tfrac{1}{2})^{2}>b-r-\tfrac{1}{2}(\gamma_{0}-b-\tfrac{1}{2})^{2}=0$.
\end{itemize}
\end{itemize}
\end{proof}

\begin{prop}\label{pi=0;r=b-delta}
In the case of $\gamma> b=r$, there is only one non-increasing redeeming boundary $X_{1}(t)$ with the terminal value $X_{1}(T)=K$. Moreover, the redeeming boundary satisfies the following properties.
\begin{itemize}
\item If $\gamma>r+\tfrac{1}{2}$, then $X_{1}(t)$ exists and $X_{1}(t)\leq \tfrac{2(\gamma-r)}{2(\gamma-r)-1}K$ for $0\leq t\leq T$.\footnote{In this case, one can prove $\lim\limits_{T-t\to+\infty}X_{1}(T-t)=\tfrac{2(\gamma-r)}{2(\gamma-r)-1}K$. Due to the space limitation, we leave the proof to the interested readers.}
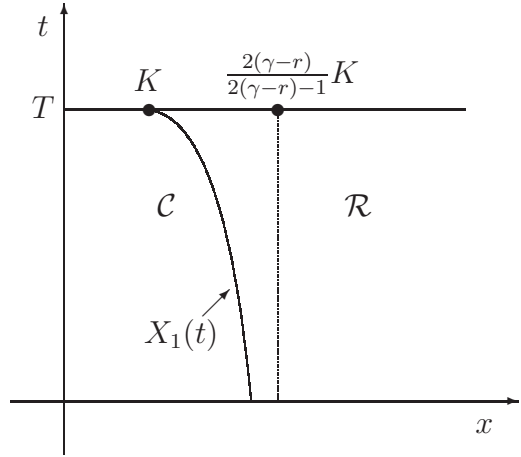
\begin{figure}[H]
\begin{center}
\begin{picture}(350,170)
\thinlines
\put(80,10){\vector(1,0){190}}
\put(100,-10){\vector(0,1){170}}
\put(100,120){\line(1,0){150}}
\qbezier(130,120)(160,120)(170,10)
\put(254,-2){$x$}
\put(90,148){$t$}
\put(88,117){$T$}

\put(126,127){$K$}
\put(129,117){$\bullet$}

\multiput(180,120)(0,-1){110} {\line(0,-1){0.4}}
\put(177,117){$\bullet$}
\put(160,130){$\tfrac{2(\gamma-r)}{2(\gamma-r)-1}K$}

\put(130,32){$X_{1}(t)$}
\put(152,42){\vector(1,1){10}}
\put(135,80){$\setc$}
\put(205,80){$\setr$}
\end{picture}\vspace{10pt}
\caption{$\pi=0$, $b=r$ and $\gamma> r+\tfrac{1}{2}$.}
\label{fig:10-1}
\end{center}\vspace{-14pt}
\end{figure}
\item If $\gamma\leq r+\tfrac{1}{2}$, then $\lim\limits_{T-t\to+\infty}X_{1}(T-t)=+\infty$.\footnote{We suspect that $X_{1}$ disappears for some finite time $t$ in this case, but cannot prove it. }
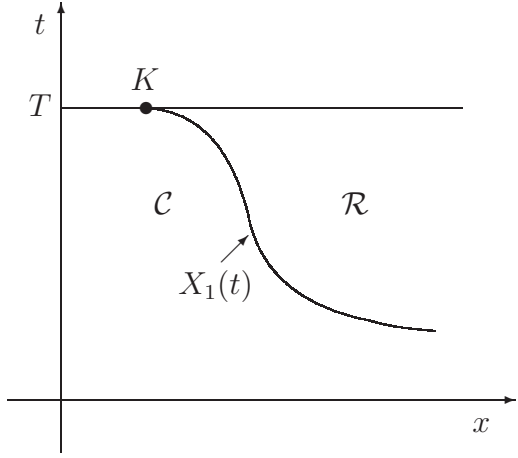
\begin{figure}[H]
\begin{center}
\begin{picture}(350,170)
\thinlines
\put(80,10){\vector(1,0){190}}
\put(100,-10){\vector(0,1){170}}
\put(100,120){\line(1,0){150}}
\qbezier(130,120)(160,120)(170,80)
\qbezier(170,80)(176,48)(215,40)
\qbezier(215,40)(225,37)(240,36)
\put(254,-2){$x$}
\put(90,148){$t$}
\put(88,117){$T$} 

\put(126,127){$K$}
\put(129,117){$\bullet$}

\put(144,50){$X_{1}(t)$}
\put(160,62){\vector(1,1){10}}
\put(135,80){$\setc$}
\put(205,80){$\setr$}
\end{picture}\vspace{10pt}
\caption{$\pi=0$ and $b=r < \gamma\leq r+\tfrac{1}{2}$.}
\label{fig:10}
\end{center}\vspace{-14pt}
\end{figure}
\end{itemize}
\end{prop}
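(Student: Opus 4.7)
My plan is the following. First, observe that since $b = r$, a direct computation gives $-\BL^{0}(x - K) = (\gamma - r)K > 0$ for every $x > 0$, so the obstacle-side inequality holds trivially everywhere; the redeeming region therefore has the simple form $\{x \geq X_{1}(t)\}$, with no upper boundary $X_{2}$. The non-increasingness of $X_{1}$ in $t$ follows from the non-increasingness of $u_{0}$ in $t$, and strict monotonicity is obtained via the Hopf lemma exactly as in \citeprop{gamm>b-delta>r}. The identity $X_{1}(T) = K$ is obtained by contradiction: if $X_{1}(T) > K$, then on a rectangle $(K, X_{1}(T)) \times (T - \varepsilon, T)$ we have $-\BL^{0} u_{0} = 0$ with $u_{0}(\cdot, T) = x - K$, so that $\p_{t} u_{0}(x, T) = -(b - \gamma)x - (\gamma - r)(x - K)$, which equals $(\gamma - r)K > 0$ at $x = K$, contradicting $\p_{t} u_{0} \leq 0$.

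For the sub-case $\gamma > r + \tfrac{1}{2}$, I would mimic the super-solution construction in \citeprop{gamm>b-delta>r}. Setting $\lambda_{0} := 2(\gamma - r) > 1$, a direct check shows that $\lambda_{0}$ is a root of $\lambda^{2} - (2(\gamma - r) + 1)\lambda + 2(\gamma - r) = 0$, the characteristic equation arising from $\BL^{0}[x^{\lambda}] = 0$ when $b = r$. Consequently
\[
v(x) := \left(\tfrac{x}{\lambda_{0}}\right)^{\lambda_{0}}\left(\tfrac{\lambda_{0} - 1}{K}\right)^{\lambda_{0} - 1}
\]
satisfies $\BL^{0} v \equiv 0$, is strictly convex, and touches the obstacle $(x - K)^{+}$ only at $x_{0} := \tfrac{\lambda_{0}}{\lambda_{0} - 1}K = \tfrac{2(\gamma - r)}{2(\gamma - r) - 1}K$, with $v(x) > (x - K)^{+}$ elsewhere. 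Hence $v$ is a super-solution of \eqref{u0}, and a standard comparison argument yields $(x - K)^{+} \leq u_{0}(x, t) \leq v(x)$ on $\seta_{0}$, which forces $u_{0}(x_{0}, t) = x_{0} - K$ for every $t \in [0, T]$, giving $X_{1}(t) \leq x_{0}$.

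For $\gamma \leq r + \tfrac{1}{2}$, the super-solution approach collapses (the useful characteristic root becomes $\lambda = 1$, which admits no non-trivial power majorant of $(x - K)^{+}$), so I would instead produce a lower bound on $u_{0}$ via a hitting-time strategy. Fix $c > 1$ and let $\sigma_{c} := \inf\{s \geq t : X_{s} = cx\}$; using the stopping time $\sigma_{c} \wedge T$ and restricting to $\{\sigma_{c} < T\}$ yields
\[
u_{0}(x, t) \geq (cx - K)\,\BE{e^{(\gamma - r)(\sigma_{c} - t)}\mathbf{1}_{\{\sigma_{c} < T\}}\,\big|\,X_{t} = x}.
\]
With $b = r$, the process $\log(X_{s}/x)$ is a Brownian motion with drift $-(k + \tfrac{1}{2})$, where $k := \gamma - r$, and the classical first-passage Laplace-transform formula for Brownian motion with drift gives
\[
\BE{e^{k\sigma_{c}}\mathbf{1}_{\{\sigma_{c} < \infty\}}} = c^{-((k + 1/2) + |k - 1/2|)} = \tfrac{1}{c} \quad \text{whenever } k \leq \tfrac{1}{2},
\]
the key cancellation being $(k + \tfrac{1}{2}) + (\tfrac{1}{2} - k) = 1$. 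Monotone convergence as $T - t \to \infty$ then produces $\liminf_{T - t \to \infty} u_{0}(x, t) \geq (cx - K)/c = x - K/c > (x - K)^{+}$, so $x < X_{1}(t)$ for $T - t$ sufficiently large. Since $x$ was arbitrary, $\lim_{T - t \to \infty} X_{1}(T - t) = +\infty$.

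The main obstacle is the borderline threshold $\gamma = r + \tfrac{1}{2}$, at which the two roots of the characteristic polynomial coincide and the super-solution $v$ degenerates; the hitting-time argument nonetheless survives precisely because the Laplace-transform identity above continues to evaluate to $1/c$ throughout the whole range $k \leq \tfrac{1}{2}$, which is exactly the regime in which the super-solution approach fails. Justifying the limit-passage in this critical case (in particular, verifying that the associated exponential martingale is uniformly integrable along $\sigma_{c} \wedge n$) is the most delicate technical point.
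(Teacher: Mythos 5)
Your two sub-case arguments are essentially sound, but the opening step --- the claim that because $-\BL^0(x-K)=(\gamma-r)K>0$ for all $x$, ``the redeeming region therefore has the simple form $\{x\geq X_1(t)\}$, with no upper boundary $X_2$'' --- is a non sequitur, and this is precisely the part of the proposition asserting that only one boundary exists. Positivity of $-\BL^0(x-K)$ everywhere only means that no portion of $\{x>K\}$ is \emph{excluded} from the redeeming region by the necessary condition $-\BL^0(x-K)\geq 0$; it does not force the redeeming region to reach $x=+\infty$ (for a perpetual-type call with zero dividend one likewise has $-\BL(x-K)=rK>0$ everywhere while the exercise region is empty). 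The paper closes this in one line using \citelem{lem:uxbound} restricted to $\pi=0$ with $b=r$: $0\leq u_0(x,t)-u_0(y,t)\leq x-y$, so $u_0(x,t)-x$ is non-increasing in $x$ and $u_0(y,t)=y-K$ forces $u_0(x,t)=x-K$ for all $x\geq y$, i.e.\ $X_2$ disappears. You need this (or the equivalent observation $\p_x u_0\leq 1$) before your later deductions ``$x_0\in\setr$ hence $X_1(t)\leq x_0$'' and ``$u_0(x,t)>(x-K)^+$ hence $x<X_1(t)$'' identify the single boundary correctly. The rest of your first bullet and the terminal value $X_1(T)=K$ match the paper's argument (the paper uses the truncated super solution $w$ equal to $(x-K)^+$ beyond $x_1$, you use the globally tangent power function; both give $u_0(x_1,t)=x_1-K$ and hence the same bound).

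For $\gamma\leq r+\tfrac12$ your route is genuinely different from the paper's. The paper passes to the stationary limit $u_0^\infty(x)=\lim_{T-t\to\infty}u_0(x,t)$, solves $-\tfrac12x^2\p_{xx}u_0^\infty+(\gamma-r)(x\p_xu_0^\infty-u_0^\infty)=0$ explicitly (general solution $C_1x\ln x+C_2x$ or $C_1x^{2(\gamma-r)}+C_2x$), kills $C_1$ via the growth condition at $x=0$, and notes that $C_2x$ cannot meet the smooth-fit conditions at any finite free boundary. Your hitting-time bound is correct: with drift $-(k+\tfrac12)$ and exponential weight $e^{k\sigma_c}$ the relevant discriminant is $(k+\tfrac12)^2-2k=(k-\tfrac12)^2$, so $\BE{e^{k\sigma_c}\mathbf{1}_{\{\sigma_c<\infty\}}}=c^{-(k+1/2+|k-1/2|)}=1/c$ exactly for $k\leq\tfrac12$, and monotone convergence gives $\lim u_0(x,t)\geq x-K/c>x-K$. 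This buys you independence from the (nontrivial) facts that $u_0^\infty$ solves the stationary variational inequality and that smooth fit holds at a finite boundary; the uniform-integrability issue you flag at $k=\tfrac12$ is harmless, since the identity can be checked directly against the explicit first-passage density rather than by optional stopping. One small refinement worth recording: since the expectation $\BE{e^{k(\sigma_c-t)}\mathbf{1}_{\{\sigma_c<T\}}}$ does not depend on the starting level $x$, your lower bound is uniform over $x\in(K,M]$ for each $M$, which is what lets you pass from ``each fixed $x$ eventually lies in $\setc$'' to $X_1(T-t)\to+\infty$.
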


\begin{proof}
Assume $\gamma>b=r$. By \citelem{lem:uxbound}, we have 
\[0\leq u_{0}(x,t)-u_{0}(y,t)\leq x-y,\quad x\geq y\geq 0,\]
that is 
\[ -K\leq u_{0}(x,t)-x\leq u_{0}(y,t)-y.\]
Therefore, if $u_{0}(y,t)=y-K$ for some $y$, then $u_{0}(x,t)=x-K$ for any $x\geq y$. Hence, the free boundary $X_{2}(t)$ disappears.

\begin{itemize}
\item Suppose $\gamma > r+\tfrac{1}{2}$. Let $\lambda_1=2(\gamma-r)>1$, $x_{1}=\frac{\lambda_1}{\lambda_1-1}K$ and
\begin{align*}
w(x)=\begin{cases}
\left(\frac{x}{\lambda_1}\right)^{\lambda_1}\left(\frac{\lambda_1-1}{K}\right)^{\lambda_1-1},&\quad 0<x<x_{1};\\
(x-K)^{+},&\quad x\geq x_{1}.
\end{cases}
\end{align*}
It can be seen that $w(x)$ is a super solution of \eqref{u0}, hence
\[u_{0}(x,t)\leq w(x)=(x-K)^{+}, \quad x\geq x_{1}=\tfrac{2(\gamma-r)}{2(\gamma-r)-1}K.\]
The claim follows immediately.

\item Suppose $\gamma \leq r+\tfrac{1}{2}$. 
Since $x$ is a super solution of \eqref{u0} with $b=r$, we have $u_0(x,t)\leq x$. Set
\[u_0^\infty(x)=\lim\limits_{T-t\to+\infty}u_0(x,t)\leq x.\]
Then, it solves the corresponding stationary problem of \eqref{u0}
\begin{align*}
\min \Big\{-\tfrac{1}{2} x^2\p_{xx}u_0^\infty(x)+(\gamma-r) (x\p_{x}u_0^\infty(x)-u_0^\infty(x)),\quad u_0^\infty(x)-\left(x-K \right)^+\Big\}=0, \;\; x>0.
\end{align*}
Let \[x_{0}=\lim\limits_{T-t\to+\infty}X_{1}(T-t)\in[K,+\infty].\] 
Then $u_0^\infty$ and $x_0$ satisfy
\begin{align}\label{4.10}
\begin{cases}
-\tfrac{1}{2} x^2\p_{xx}u_0^\infty(x)+(\gamma-r) (x\p_{x}u_0^\infty(x)-u_0^\infty(x))=0, & 0<x<x_0;\\
\left(x-K\right)^+\leq u_0^\infty(x)\leq x, & x>0.
\end{cases}
\end{align}
Moreover, 
\begin{align}\label{4.11}
\text{if $x_{0}<+\infty$, then $u_0^\infty(x_0)=x_{0}-K$, and $\p_x u_0^\infty(x_0)=1$. }
\end{align} 
The general solution of the equation \eqref{4.10} is 
\[u_0^\infty(x)=
\begin{cases}
C_1 x\ln x +C_2x, & \quad \text{if $\gamma= r+\tfrac{1}{2}$};\\
C_1 x^{2(\gamma-r)} +C_2x,& \quad \text{if $\gamma< r+\tfrac{1}{2}$};\\
\end{cases}
\quad 0<x<x_{0}.
\]
Letting $x$ go to 0 in the condition $\left(1-\tfrac{K}{x} \right)^+\leq \frac{u_0^\infty(x)}{x}\leq 1$, we obtain $C_1=0$. 
Hence, $u_0^\infty(x)=C_{2}x$ for $0<x<x_{0}$. Consequently, the condition \eqref{4.11} cannot be satisfied for any $x_{0}<\infty$. So $x_{0}=+\infty$, and the claim follows. 
\end{itemize}
The proof is complete.
\end{proof}

\section{General case: $0<\pi<1$}
\noindent
Now, we go back to the general case $0<\pi<1$.
\par
We will discuss the behaviors of the redeeming boundaries. From \citelem{umono} we see that
\begin{align}
\p_t\left(u(x,\pi,t)-(x-K)^+\right)& \leq 0, \quad (x,\pi, t)\in\seta;\label{ut}\\
\p_\pi\left(u(x,\pi,t)-(x-K)^+\right)&\geq 0, \quad (x,\pi, t)\in\seta.\label{upi}
\end{align}
Moreover, \eqref{uxbound2} reveals
\begin{align}\label{ux}
\p_x\left(u(x,\pi,t)-(x-K)^+\right) &\leq 0, \quad \text{if }r\geq a,\; x\geq K, \; (x,\pi, t)\in\seta.
\end{align}
\par
Define, for each fixed $(\pi,t)\in(0,1)\times[0,T)$,
\begin{align}
X_1(\pi,t)&:=\min\{x\mid u(x,\pi,t)=(x-K)^+,\; (x,\pi,t)\in\seta\}, \label{x1def} \\
X_2(\pi,t)&:=\max\{x\mid u(x,\pi,t)=(x-K)^+,\; (x,\pi,t)\in\seta\}.\label{x2def}
\end{align}
Then
\begin{lemma}\label{regionbyx1x2}
We have
\begin{align}
\setc&=\{(x,\pi,t)\in\seta \mid x<X_1(\pi,t)\;\text{or}\; x>X_2(\pi,t)\},\\
\setr&=\{(x,\pi,t)\in\seta \mid X_1(\pi,t)\leq x \leq X_2(\pi,t)\}.\label{y1y2}
\end{align}
Moreover, $X_1(\pi,t)$ is non-increasing in $t$ and non-decreasing in $\pi$, while $X_2(\pi,t)$
is non-decreasing in $t$ and non-increasing in $\pi$, if they exist; $X_1(\pi,t)>K$.
\end{lemma}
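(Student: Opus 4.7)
The plan is to derive the interval structure of each $(\pi,t)$-slice of $\setr$ from convexity, and then to read off the monotonicity of $X_1$ and $X_2$ from the inequalities \eqref{ut}--\eqref{upi} already established. Introduce the slack function $\phi(x;\pi,t):=u(x,\pi,t)-(x-K)^+$, which is nonnegative on $\seta$ and continuous on $\overline{\seta}$ by the corollary above. Since \eqref{regions} gives $\setr\subseteq\{x>K\}$, I only need to analyze $\phi$ on the half-line $x>K$, where $(x-K)^+=x-K$ is affine in $x$. By \citelem{umono}, $u(\cdot,\pi,t)$ is convex in $x$, so $\phi(\cdot;\pi,t)$ is a nonnegative continuous convex function on $(K,\infty)$.

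The zero level set of such a function on an open interval is a closed convex subset, i.e., either empty or a closed (possibly unbounded) interval. With the conventions in \eqref{x1def}--\eqref{x2def} together with $\sup\emptyset=+\infty$, this slice therefore equals $[X_1(\pi,t),X_2(\pi,t)]$ whenever nonempty, which yields the descriptions of $\setr$ and $\setc=\seta\setminus\setr$ claimed in the lemma. For the strict bound, $\phi(K;\pi,t)=u(K,\pi,t)>0$ since $u>0$ on $\seta$ (as noted in the paragraph preceding \eqref{regions}, by picking $\tau=T$), so by continuity of $\phi$ there is a neighborhood of $K$ on which $\phi>0$; this separates the zero set of $\phi$ from $x=K$ and hence gives $X_1(\pi,t)>K$ whenever $X_1(\pi,t)$ exists.

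For the monotonicity, inequality \eqref{ut} states that $\phi$ is non-increasing in $t$ pointwise in $(x,\pi)$, so the zero level set $\{x\mid\phi(x;\pi,t)=0\}$ can only enlarge as $t$ increases; its infimum $X_1(\pi,\cdot)$ is thus non-increasing and its supremum $X_2(\pi,\cdot)$ non-decreasing in $t$. Symmetrically, \eqref{upi} states that $\phi$ is non-decreasing in $\pi$, so the zero level set shrinks as $\pi$ grows, which forces $X_1(\cdot,t)$ to be non-decreasing and $X_2(\cdot,t)$ non-increasing in $\pi$.

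No serious obstacle is expected: the entire statement is a structural consequence of the convexity of $u$ in $x$ combined with the two monotonicity inequalities already proved. The only point requiring a little care is to avoid the kink of $(x-K)^+$ at $x=K$, which is handled automatically by restricting attention to the half-line $x>K$, legitimately thanks to \eqref{regions}.
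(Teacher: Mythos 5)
Your proposal is correct and follows essentially the same route as the paper: convexity of $u(\cdot,\pi,t)$ (from Lemma \ref{umono}) makes each $(\pi,t)$-slice of $\setr$ an interval, and the monotonicity of $X_1$ and $X_2$ is read off from \eqref{ut} and \eqref{upi}, with $X_1>K$ coming from \eqref{regions}. The paper's proof is just a terser version of the same argument.
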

\begin{proof}
For each fixed $(\pi,t)\in(0,1)\times[0,T)$, by \citelem{umono}, the function $x\mapsto u(x,\pi,t)$ is convex; hence, $u\leq x-K$ must be an interval of $x$, which clearly implies the expressions $\setc$ and $\setr$. The monotonicity of $X_1(\pi,t)$ and $X_2(\pi,t)$ is a simple consequence of \eqref{ut} and \eqref{upi}.
\end{proof}
Similarly, for each fixed $(x,t)\in(0,\infty)\times[0,T)$, define
\begin{align}\label{pidef}
\Pi(x,t)&:=\max\{\pi\mid u(x,\pi,t)=(x-K)^+,\; (x,\pi,t)\in\seta\},
\end{align}
with $\Pi(x,t)=0$ when the set on the right hand side is empty. 
Then, by \eqref{ut} and \eqref{upi}, 
\begin{lemma}\label{regionbypi}
We have
\begin{align*}
\setc &=\{(x,\pi,t)\in\seta\mid \pi>\Pi(x,t)\},\\
\setr &=\{(x,\pi,t)\in\seta\mid \pi\leq \Pi(x,t)\}.
\end{align*}
Moreover, $\Pi(x,t)$ is non-decreasing in $t$.
\end{lemma}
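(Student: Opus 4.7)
The plan is to derive both the set-theoretic description and the monotonicity of $\Pi(\cdot,\cdot)$ as direct consequences of the monotonicity of $u$ proved in \citelem{umono}, together with the continuity supplied by the Corollary at the end of Section~3. No new estimates or PDE arguments should be required.

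First I would fix $(x,t) \in (0,\infty) \times [0,T)$ and study the auxiliary function $\pi \mapsto \varphi(\pi) := u(x,\pi,t) - (x-K)^+$ on $[0,1]$. By \citelem{umono}, $\varphi$ is non-decreasing in $\pi$, and by definition of the value function $\varphi \geq 0$. Hence the level set $\{\pi \in (0,1) : \varphi(\pi) = 0\}$ is a (possibly empty) lower sub-interval of $(0,1)$: if $\pi_0$ lies in the set then every $\pi \leq \pi_0$ in $(0,1)$ does as well. Continuity of $u$ in $\pi$ (the Corollary) makes the level set relatively closed in $(0,1)$, so when it is nonempty it coincides with $(0, \Pi(x,t)]$, where $\Pi(x,t)$ is exactly the maximum from \eqref{pidef}; when it is empty the convention $\Pi(x,t)=0$ makes the identity $\{\pi \leq \Pi(x,t)\} = \emptyset$ hold trivially. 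Recalling the definitions \eqref{C}, \eqref{R} of $\setc$ and $\setr$, this gives the asserted characterizations
\[
\setr = \{(x,\pi,t)\in\seta \mid \pi \leq \Pi(x,t)\},\qquad \setc = \{(x,\pi,t)\in\seta \mid \pi > \Pi(x,t)\}.
\]

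For the monotonicity in $t$, I would fix $(x,\pi)$ and exploit the fact that $t \mapsto u(x,\pi,t)$ is non-increasing, again by \citelem{umono}. Suppose $0 \leq s \leq t < T$ and $\pi \leq \Pi(x,s)$, so that $u(x,\pi,s) = (x-K)^+$. Then
\[
(x-K)^+ \leq u(x,\pi,t) \leq u(x,\pi,s) = (x-K)^+,
\]
which forces $u(x,\pi,t) = (x-K)^+$, i.e.\ $\pi \leq \Pi(x,t)$. This gives $\Pi(x,s) \leq \Pi(x,t)$, and one has to check the cases $\Pi(x,s) = 0$ (trivial) and $\Pi(x,s) > 0$ separately to handle the supremum via a limit, but both follow immediately from the continuity/closedness observation above.

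I do not anticipate a genuine obstacle here; the whole statement is a packaging of monotonicity properties already in hand. The only point worth being careful about is the empty-set convention in the definition \eqref{pidef}, so that the identity for $\setr$ remains valid when no bull-side redeeming region exists at the cross-section $(x,t)$, and that continuity in $\pi$ is needed to justify the use of the maximum (rather than a supremum) in \eqref{pidef}.
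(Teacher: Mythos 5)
Your proposal is correct and follows essentially the same route as the paper: the paper derives the lemma in one line from the monotonicity relations \eqref{ut} and \eqref{upi} (i.e.\ that $u-(x-K)^+$ is non-increasing in $t$ and non-decreasing in $\pi$, which come from \citelem{umono}), and your argument is simply a careful write-up of that, with the level-set/closedness details and the empty-set convention made explicit.
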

\par
The target of this paper is reduced to studying the properties of these redeeming boundaries $X_1(\pi,t)$, $X_2(\pi,t)$, and $\Pi(x,t)$.
\par
Since $\gamma>r$, we only need to consider the following cases:
\begin{itemize}
\item[] Case 0: $b\geq \gamma>r$
\item[] Case 1: $r\geq a $
\item[] Case 2: $\gamma>b>r$ 
\item[] Case 3: $\gamma>r>b$ and $a>r$
\item[] Case 4: $\gamma>r=b$
\end{itemize}
We start with the simplest Case 0.

\subsection{Case 0: $b\geq \gamma>r$}
\noindent
If $b\geq \gamma>r$, a simple calculation shows for $x\geq K$
\begin{align}\label{no}
-\BL (x-K)&=-\BL x+\BL K\nn\\
&=(r-(\Delta\pi+b ))x+(\gamma-r)K\nn\\
&< (r-b)x+(\gamma-r)K\nn\\
&\leq (r-b)K+(\gamma-r)K\nn\\
&=(\gamma-b)K\nn\\
&\leq 0,
\end{align}
then we see that $\setr=\emptyset$ and $\setc=\seta$.
\begin{thm}
Suppose $b\geq \gamma>r$, then the optimal redeeming time for the problem \eqref{main00} is
\[\tau^{*}=T.\] 
Economically speaking, it is never optimal to redeeming the stock early because the loan rate is very low.
\end{thm}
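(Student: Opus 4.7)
The plan is to establish $\setr=\emptyset$, after which the definition of the hitting time gives $\tau^{*}=\inf\emptyset\wedge T=T$ immediately. The PDE computation recorded in \eqref{no} already carries the heart of the matter: for every $(x,\pi,t)\in\seta$ with $x\geq K$ the inequality $-\BL(x-K)<0$ holds, so the obstacle $(x-K)^{+}$ cannot be a local supersolution of $-\BL$ and hence cannot coincide with $u$ there. Because the value function has only the limited regularity proved in Section~3 (and \eqref{pu} holds merely in a weak or viscosity sense), I would convert this heuristic into a direct probabilistic comparison rather than leaning on pointwise validity of the variational inequality.

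First I would dispatch the region $x\leq K$ using the inclusion \eqref{regions}, which already places those points in $\setc$. For $x>K$ and $t<T$ I would substitute the deterministic stopping rule $\tau=T$ into the defining supremum \eqref{main00} to bound $u$ from below. Conditioning on $\mu$ (which is independent of $W$) and invoking the definition $\pi_{t}=\BP{\mu-\delta=a\mid\BF_{t}^{S}}$ gives
\[
\BE{X_{T}\mid X_{t}=x,\pi_{t}=\pi}=x\bigl[\pi e^{(a-\gamma)(T-t)}+(1-\pi)e^{(b-\gamma)(T-t)}\bigr]\geq x,
\]
because $a>b\geq\gamma$. Jensen's inequality applied to the convex payoff $y\mapsto(y-K)^{+}$ then yields $\BE{(X_{T}-K)^{+}\mid X_{t},\pi_{t}}\geq(x-K)^{+}$, and multiplying by the discount factor $e^{(\gamma-r)(T-t)}$, strictly greater than $1$ since $\gamma>r$ and $t<T$, produces
\[
u(x,\pi,t)\geq e^{(\gamma-r)(T-t)}(x-K)^{+}>(x-K)^{+}.
\]
Thus every such $(x,\pi,t)$ lies in $\setc$, and together with the easy case $x\leq K$ this forces $\setr=\emptyset$ and therefore $\tau^{*}=T$.

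The main obstacle is only philosophical: one must resist applying \eqref{pu} in the classical pointwise sense, because the regularity established so far is insufficient. Bypassing this via Jensen together with the explicit $\tau=T$ lower bound is elementary. The one point requiring care is the conditional-expectation calculation, which combines the independence of $\mu$ and $W$ with the defining property of $\pi_{t}$; both ingredients are already standard in the filtering setup of Section~2.
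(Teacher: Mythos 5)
Your proposal is correct, and it takes a genuinely different route from the paper. The paper's own argument is the obstacle computation \eqref{no}: for $x\geq K$ one has $-\BL(x-K)=(r-(\Delta\pi+b))x+(\gamma-r)K\leq(\gamma-b)K\leq 0$, from which it concludes directly that $\setr=\emptyset$; this is a supersolution-type argument resting on the variational inequality \eqref{pu}, whose pointwise validity is (as the paper itself concedes) delicate given the degenerate operator and the limited regularity of $u$. You instead produce a direct probabilistic lower bound: plugging the admissible rule $\tau=T$ into \eqref{main00}, using that $X$ is a submartingale here (the drift $\Delta\pi_t+b-\gamma\geq b-\gamma\geq 0$ in \eqref{pro:X}, or equivalently your explicit conditional-expectation formula obtained by conditioning on $\mu$ and using the tower property with $\pi_t$), then Jensen and monotonicity of $y\mapsto(y-K)^{+}$, and finally the strict factor $e^{(\gamma-r)(T-t)}>1$ for $t<T$, to get $u(x,\pi,t)>(x-K)^{+}$ on all of $\seta$ with $x>K$; combined with \eqref{regions} for $x\leq K$ this gives $\setc=\seta$, $\setr=\emptyset$, $\tau^{*}=T$. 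What your approach buys is rigour independent of the regularity theory for \eqref{pu} and an elementary, self-contained argument; what the paper's approach buys is uniformity of method, since the same computation $-\BL(x-K)$ is reused in Cases 2--4 to locate the redeeming region, whereas your martingale argument is special to the case $b\geq\gamma$. A small simplification available to you: the submartingale property of $X$ read off from \eqref{pro:X} makes the explicit two-state conditional expectation unnecessary, though your version is also correct.
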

\subsection{Case 1: $r\geq a$}
\noindent
Recalling our definition of the continuation region \eqref{C} and the redeeming region \eqref{R},
we see from \eqref{ux} that the redeeming boundary $X_{2}$ disappears if $r\geq a$.
\begin{thm}
If $r\geq a$, then
\begin{align*}
\setc=\{(x,\pi,t)\in\seta\mid x<X_{1}(\pi,t)\},\\
\setr=\{(x,\pi,t)\in\seta\mid x\geq X_{1}(\pi,t)\}.
\end{align*}
Economically speaking, one should redeem the stock if the stock price is high enough when the discounting rate is too when high compared to the return rate of the stock.
\end{thm}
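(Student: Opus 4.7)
The plan is to reduce the theorem to showing that the upper redeeming boundary $X_2(\pi,t)$ disappears everywhere when $r\geq a$; once this is established, plugging $X_2\equiv+\infty$ into the general description in \citelem{regionbyx1x2} produces exactly the claimed characterization of $\setc$ and $\setr$.

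For each fixed $(\pi,t)\in(0,1)\times[0,T)$, introduce
\[
f(x):=u(x,\pi,t)-(x-K)^{+}.
\]
Two ingredients suffice. First, $f(x)\geq 0$ everywhere, since taking $\tau=t$ in \eqref{main00} immediately gives $u(x,\pi,t)\geq (x-K)^{+}$. Second, $f$ is non-increasing in $x$ on $[K,\infty)$: this is the content of \eqref{ux}, which itself follows from the sharp Lipschitz bound \eqref{uxbound2} in \citelem{lem:uxbound} via
\[
f(x)=K+\bigl(u(x,\pi,t)-x\bigr),
\]
valid precisely because the hypothesis $r\geq a$ upgrades \eqref{uxbound1} to \eqref{uxbound2}. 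A non-negative non-increasing function, once it reaches $0$, stays at $0$. Hence, whenever $(x_{0},\pi,t)\in\setr$ (so $f(x_{0})=0$, which forces $x_{0}>K$ by \eqref{regions}), one has $0\leq f(x)\leq f(x_{0})=0$ and therefore $(x,\pi,t)\in\setr$ for every $x\geq x_{0}$. By the definition \eqref{x2def} this means $X_{2}(\pi,t)=+\infty$; since $(\pi,t)$ was arbitrary, the upper redeeming boundary is globally absent.

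Substituting $X_{2}\equiv+\infty$ into \citelem{regionbyx1x2} then yields
\[
\setc=\{(x,\pi,t)\in\seta\mid x<X_{1}(\pi,t)\},\qquad \setr=\{(x,\pi,t)\in\seta\mid x\geq X_{1}(\pi,t)\},
\]
with the usual convention $X_{1}(\pi,t)=+\infty$ on those $(\pi,t)$-slices where $\setr$ happens to be empty. There is no serious technical obstacle here: the whole argument is a one-step combination of the payoff lower bound with the Lipschitz estimate \eqref{uxbound2}. The only point worth flagging is that the argument genuinely uses $r\geq a$; without this hypothesis, \eqref{uxbound2} degrades to \eqref{uxbound1}, the function $f$ loses its monotonicity on $[K,\infty)$, and a finite upper boundary $X_{2}(\pi,t)$ can indeed appear, which is why the later cases demand significantly more delicate analysis.
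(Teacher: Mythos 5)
Your proposal is correct and follows essentially the same route as the paper: the paper also derives the disappearance of $X_2$ directly from \eqref{ux} (equivalently, from the Lipschitz bound \eqref{uxbound2}, which holds precisely because $r\geq a$), observing that $u(x,\pi,t)-(x-K)^+$ is non-negative and non-increasing in $x$ on $[K,\infty)$ and hence stays at zero once it vanishes. Your write-up merely fills in the details that the paper compresses into a single sentence before stating the theorem.
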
 \par
Now, we can summarize the results of the problem \eqref{pu}.
\begin{thm} Suppose $r\geq a $, then the redeeming boundary can be expressed as $x=X_{1}(\pi, t)$, where $X_{1}(\pi,t)$ is non-decreasing in $\pi$, non-increasing in $t$, and $X_{1}(\pi,T)=K$ for any $0\leq\pi\leq 1$.
The optimal redeeming time for the problem \eqref{main00} is given by
\[\tau^{*}=\min\{\nu\in[t,T]\mid X_{\nu}\geqslant X_{1}(\pi_{\nu},\nu)\}.\]
\end{thm}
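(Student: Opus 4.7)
The plan is to split the proof into three parts: the single-boundary structure of $\setc$ and $\setr$, the terminal value $X_1(\pi,T)=K$, and the optimal stopping formula.

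For the structural claim, I use that when $r\geq a$, the bound \eqref{ux} says $x\mapsto u(x,\pi,t)-(x-K)^+$ is non-increasing on $\{x\geq K\}$. Since this nonnegative function must remain at $0$ once it reaches $0$, the upper free boundary $X_2$ of \citelem{regionbyx1x2} disappears, yielding $\setc=\{x<X_1(\pi,t)\}$ and $\setr=\{x\geq X_1(\pi,t)\}$. The monotonicity of $X_1$ in $\pi$ and $t$, together with $X_1(\pi,t)>K$ for $t<T$, is already recorded in \citelem{regionbyx1x2}, and hence $X_1(\pi,T):=\lim_{t\to T^-}X_1(\pi,t)\geq K$ is automatic.

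The core step is the upper bound $X_1(\pi,T)\leq K$. By the $\pi$-monotonicity of $X_1$ it suffices to prove $X_1(1,T)\leq K$. At $\pi=1$ the diffusion coefficient $\Delta\pi(1-\pi)$ in \eqref{pro:pi} vanishes, so $\pi_t\equiv 1$ whenever $\pi_0=1$, and the optimal stopping problem reduces to a one-dimensional one whose value function $u(\cdot,1,\cdot)$ is the classical solution of the uniformly parabolic variational inequality \eqref{pi=1} (the Dai--Xu setting with $a$ in place of $b$). The PDE argument used in Section 4.2 to prove $X_1(T)=K$ in the $\pi=0$ case then carries over almost verbatim: if $X_1(1,T)>K$, then in a neighbourhood $(K,X_1(1,T))\times(T-\varepsilon,T)$ lying inside the continuation region one has $\BL u(\cdot,1,\cdot)=0$ together with $u(x,1,T)=x-K$; substituting $u_x=1$, $u_{xx}=u_\pi=u_{\pi\pi}=u_{x\pi}=0$, and $u=x-K$ into $\BL u=0$ yields
\[
u_t(x,1,T)=(r-a)x+(\gamma-r)K,
\]
which at $x=K$ equals $(\gamma-a)K>0$, contradicting $u_t\leq 0$ from \citelem{umono}. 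Hence $X_1(1,T)=K$, and consequently $X_1(\pi,T)=K$ for every $\pi\in[0,1]$.

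The optimal stopping formula is the standard consequence of the general theory of optimal stopping for the strong Markov process $(X_t,\pi_t)_{t\geq 0}$: the first entry into the closed redeeming region $\setr$ is optimal, and by the characterisation of $\setr$ just established this entry is precisely $\tau^{*}=\min\{\nu\in[t,T]\mid X_\nu\geq X_1(\pi_\nu,\nu)\}$, with the minimum understood as $T$ when the set is empty. The main obstacle I expect is legitimising the $\pi=1$ reduction, i.e.\ verifying that $u(\cdot,1,\cdot)$ really is the classical (sufficiently smooth) solution of \eqref{pi=1} so that the pointwise PDE computation at $(K,T)$ is valid; this follows from the absorption of $\pi_t$ at $1$ together with the classical uniform-parabolic regularity already invoked in Section~4. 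A direct argument at interior $\pi\in(0,1)$ would require handling the degenerate PDE via viscosity techniques near the terminal time, which is substantially more delicate and is why funneling everything through the boundary case $\pi=1$ is the efficient route.
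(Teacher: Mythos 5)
Your proposal is correct, and for the parts the paper actually argues it follows the same route: the disappearance of $X_2$ comes from \eqref{ux} combined with the convexity/interval structure of \citelem{regionbyx1x2}, the monotonicity of $X_1$ in $\pi$ and $t$ is inherited from that lemma, and the hitting-time characterisation of $\tau^*$ is the general optimal-stopping fact already recorded in Section 2. The one place where you genuinely add something is the terminal value $X_1(\pi,T)=K$: the paper states this theorem as a summary and never proves that claim in Section 5. Your argument --- bound $X_1(\pi,T)$ between $K$ (from $\setr\subseteq\{x>K\}$) and $X_1(1,T)$ (from the $\pi$-monotonicity of $u$, which makes the redeeming region shrink as $\pi$ increases), then kill the possibility $X_1(1,T)>K$ by the terminal-time computation $u_t(x,1,T)=(r-a)x+(\gamma-r)K>0$ in the would-be continuation rectangle --- is sound and is exactly the transplant of the paper's own Section 4.2 argument to the $\pi=1$ face, where $\pi_t$ is absorbed and the problem reduces to the uniformly parabolic Dai--Xu setting (so the pointwise PDE manipulation is legitimate, at the same level of rigour the paper itself uses at $\pi=0$). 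Note in passing that the displayed forms of \eqref{pi=0} and \eqref{pi=1} carry a zeroth-order term $+ru$ that is inconsistent with the operator $\BL$ (it should be $-(\gamma-r)u$); you correctly worked with $\BL$ rather than with that typo, and your sign bookkeeping, including $(\gamma-a)K>0$ from $\gamma>r\geq a$, checks out.
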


\subsection{Case 2: $\gamma>b>r$}
\noindent
In this case, the inequality in \eqref{ux} may not hold.
\begin{thm} Suppose $\gamma>b>r$, then
\begin{align}\label{41}
X_1(\pi,t)> K,\quad X_2(\pi,t)< \tfrac{\gamma-r}{b-r}K.
\end{align}
\begin{figure}[H]
\begin{center}
\begin{picture}(160,150)
\put(0,110){\line(1,0){160}}
\put(0,20){\vector(1,0){180}}
\put(0,20){\vector(0,1){120}}
\qbezier(40,20)(40,60)(60,110)
\qbezier(120,20)(120,60)(90,110)

\multiput(20,110)(0,-1){90}{\line(1,0){0.5}}
\put(15,6){$K$}
\put(17,17){$\bullet$}

\multiput(140,110)(0,-1){90}{\line(1,0){0.5}}
\put(130,6){$\tfrac{\gamma-r}{b-r}K$}
\put(137,17){$\bullet$}

\put(168,8){$x$}
\put(-10,20){$0$}
\put(-10,105){$1$}
\put(-12,134){$\pi$}
\put(75,70){$\setr $}
\put(145,70){$\setc $}
\put(25,70){$\setc $}
\end{picture}\vspace{-10pt}
\caption{In the plan $t$-section, $\gamma >b>r$.}
\label{fig:5}
\end{center}
\end{figure}
\end{thm}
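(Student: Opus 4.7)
The plan is to treat the two inequalities separately. For the lower bound $X_1(\pi,t)>K$, I would rely only on positivity and continuity of $u$: choosing $\tau=T$ in \eqref{main00} together with the positive density of the Brownian increment gives $u(x,\pi,t)>0$ on the whole of $\seta$, so in particular $u(K,\pi,t)>0=(K-K)^+$. Since $u$ is continuous on $\overline{\seta}$ by the corollary to Lemmas 3.2--3.6 and the obstacle $(x-K)^+$ is continuous, the strict inequality $u(x,\pi,t)>(x-K)^+$ persists on a neighborhood of $x=K$. Because the set $\{x\geq 0\mid u(x,\pi,t)=(x-K)^+\}$ is closed, its minimum (if finite) must therefore lie strictly above $K$, which is \eqref{41} for $X_1$.

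For the upper bound $X_2(\pi,t)<\tfrac{\gamma-r}{b-r}K$, my plan is to apply It\^o's formula to $e^{(\gamma-r)(s-t)}(X_s-K)$ using the dynamics \eqref{pro:X}. A short calculation gives
\begin{align*}
\dd\!\left[e^{(\gamma-r)(s-t)}(X_s-K)\right]
=e^{(\gamma-r)(s-t)}\Big[(\Delta\pi_s+b-r)X_s-(\gamma-r)K\Big]\ds+e^{(\gamma-r)(s-t)}X_s\dd\overline{W}_s.
\end{align*}
Taking expectations at any $\tau\in\sett_{[t,T]}$ and using $(X_\tau-K)^+\geq X_\tau-K$,
\begin{align*}
u(x,\pi,t)-(x-K)\geq\sup_{\tau\in\sett_{[t,T]}}\BE{\int_t^\tau e^{(\gamma-r)(s-t)}\Big[(\Delta\pi_s+b-r)X_s-(\gamma-r)K\Big]\ds\;\Big|\;X_t=x,\pi_t=\pi}.
\end{align*}
When $X_s\geq \tfrac{\gamma-r}{b-r}K$, the bracket is at least $\Delta\pi_s X_s$, which is \emph{strictly} positive for $\pi_s\in(0,1)$ (recall $\pi_s$ remains in $(0,1)$ a.s.\ from the driftless SDE \eqref{pro:pi}).

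Hence for any $x\geq\tfrac{\gamma-r}{b-r}K$ and $\pi\in(0,1)$, I would choose $\tau$ as the first exit time of $X$ from a small neighborhood of $x$ intersected with $[t,T]$; by path continuity this time is a.s.\ strictly greater than $t$, and throughout $[t,\tau]$ the integrand stays strictly positive. The expectation is therefore strictly positive, so $u(x,\pi,t)>(x-K)^+$ at every such point; this rules out $[\tfrac{\gamma-r}{b-r}K,\infty)$ from $\setr$ and forces $X_2(\pi,t)<\tfrac{\gamma-r}{b-r}K$. The main delicate step is showing that the inequality is \emph{strict} at the threshold $x=\tfrac{\gamma-r}{b-r}K$ itself: that is exactly where the leading deterministic contribution $(b-r)X_s-(\gamma-r)K$ vanishes, so one really needs the genuinely stochastic term $\Delta\pi_s X_s>0$ together with the fact that $\pi_s\in(0,1)$ a.s.\ to obtain a strictly positive increment.
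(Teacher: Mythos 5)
Your argument is correct, and at its core it rests on the very same computation as the paper's: the drift you extract from It\^o's formula, $(\Delta\pi_s+b-r)X_s-(\gamma-r)K$, is exactly $\BL(x-K)$, and the paper's proof simply observes that $-\BL(x-K)=(r-(\Delta\pi+b))x+(\gamma-r)K\geq 0$ must hold on $\setr$ by the variational inequality \eqref{pu}, which forces $x\leq\tfrac{\gamma-r}{\Delta\pi+b-r}K<\tfrac{\gamma-r}{b-r}K$; together with \eqref{regions} this gives both bounds in two lines. What your probabilistic packaging via Dynkin's formula buys is that it never invokes \eqref{pu} pointwise on the redeeming region --- something the paper's proof implicitly does even though the equation is only claimed to hold in a weak or viscosity sense --- so your route is more self-contained given the degeneracy of $\BL$. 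One small repair is needed at the threshold $x=\tfrac{\gamma-r}{b-r}K$: you stop at the first exit of $X$ alone from a neighborhood of $x$, but inside that neighborhood the deterministic part $(b-r)X_s-(\gamma-r)K$ can be slightly negative and is dominated by $\Delta\pi_s X_s$ only when $\pi_s$ is not too small; take $\tau$ instead to be the first exit time of the pair $(X_s,\pi_s)$ from a neighborhood of $(x,\pi)$ on which the continuous function $(\Delta\pi'+b-r)x'-(\gamma-r)K$ is bounded below by a positive constant, and the strict positivity of the expected integral follows. Your treatment of $X_1(\pi,t)>K$ coincides with the paper's appeal to \eqref{regions} and is fine as stated.
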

\begin{proof} Recall \eqref{no},
\begin{align}\label{53}
-\BL (x-K)&=(r-(\Delta\pi+b ))x+(\gamma-r)K,
\end{align}
and note $r-(\Delta\pi+b )< r-b<0$,
so if $-\BL (x-K)\geq 0$, then
\[ x\leq \tfrac{\gamma-r}{ \Delta\pi+b-r}K< \tfrac{\gamma-r}{b-r}K.\]
Thus, \eqref{41} follows from this and \eqref{regions}.
\end{proof}

\begin{remark}\label{remarkcase2}
If $a >\gamma>b $, then $0<\tfrac{\gamma-b}{\Delta}<1$.
Suppose $\pi\geq \tfrac{\gamma-b}{\Delta}$, then $\Delta\pi+b\geq \gamma$, hence, by \eqref{53}, for $x> K$,
\begin{align*}
-\BL (x-K)&=(r-(\Delta\pi+b ))x+(\gamma-r)K
\leq (r-\gamma)x+(\gamma-r)K< 0.
\end{align*}
Recalling \eqref{regions}, we see that $\left\{(x,\pi,t)\in\seta \mid \frac{\gamma-b}{\Delta}\leq\pi<1\right\}\subseteq\setc $ (see Figure \ref{fig:6}).
\begin{figure}[H]
\begin{center}
\begin{picture}(160,150)
\put(0,113){\line(1,0){160}}
\put(0,20){\vector(1,0){180}}
\put(0,20){\vector(0,1){120}}
\qbezier(40,20)(40,60)(75,85)
\qbezier(115,20)(115,60)(75,85)
\put(-30,88){$\tfrac{\gamma-b}{\Delta}$}
\multiput(20,90)(1,0){120}
{\line(1,0){0.5}}

\multiput(20,90)(0,-1){70}{\line(1,0){0.5}}
\put(15,6){$K$}
\put(17,17){$\bullet$}

\multiput(140,90)(0,-1){70}{\line(1,0){0.5}}
\put(130,6){$\tfrac{\gamma-r}{b-r}K$}
\put(138,17){$\bullet$}

\put(168,8){$x$}
\put(-10,20){$0$}
\put(-10,110){$1$}
\put(-12,134){$\pi$}
\put(75,55){$\setr $}
\put(125,55){$\setc $}
\put(25,55){$\setc $}
\end{picture}\vspace{-10pt}
\caption{In the plan $t$-section, $a>\gamma >b>r$.}
\label{fig:6}
\end{center}
\end{figure}
\end{remark}

\subsection{Case 3: $\gamma>r>b$ and $a>r$}
\noindent
We consider two sub-regions separately.
\begin{itemize}
\item $\tfrac{\gamma-b}{\Delta}\leq\pi<1:$ By Remark \ref{remarkcase2}, if $a>\gamma >r>b$, then \[\left\{(x,\pi,t)\in\seta \mid \tfrac{\gamma-b}{\Delta}\leq\pi<1\right\}\subseteq\setc\]
and hence no redeeming boundary exists in this region. The above relationship holds when $\gamma\geq a >r>b$ as the left set becomes the empty set.

\item $0<\pi< \frac{\gamma-b}{\Delta}$: In this region if $ \frac{(\gamma-r)K}{\Delta x}+ \frac{r-b}{\Delta}<\pi< \frac{\gamma-b}{\Delta}$, we have $(\Delta\pi+b-r)x>(\gamma-r)K$, so by \eqref{53},
\begin{align*}
-\BL (x-K)&=(r-(\Delta\pi+b ))x+(\gamma-r)K<0,
\end{align*}
hence
\[\left\{(x,\pi,t)\in\seta \mid \tfrac{(\gamma-r)K}{\Delta x}+ \tfrac{r-b}{\Delta}<\pi< \tfrac{\gamma-b}{\Delta}\right\}\subseteq\setc. \]
\end{itemize}
Recalling \citelem{regionbypi}, we summarize the result obtained thus far in
\begin{thm}
If $\gamma >r>b$ and $a>r$, then
\begin{align*}
\setc &=\{(x,\pi,t)\in\seta\mid \pi>\Pi(x,t)\},\\
\setr &=\{(x,\pi,t)\in\seta\mid \pi\leq \Pi(x,t)\},
\end{align*}
where the redeeming boundary $\Pi(x,t)$ is first non-decreasing, and then decreasing in $x$. 
The non-decreasing part is $x=X_{1}(\pi,t)$, and the non-increasing part is $x=X_{2}(\pi,t)$.
Moreover, it is universally upper-bounded by
\[\Pi(x,t)< \tfrac{\gamma-b}{\Delta}\]
when $a>\gamma$. 
\begin{figure}[H]
\begin{center}
\begin{picture}(440,150)
\multiput(80,85)(3,0){13}{\line(1,0){0.5}}
\put(48,82){$\frac{\gamma-b}{\Delta}$}
\multiput(120,85)(0,-1){115}{\line(1,0){0.5}}
\put(114,-44){$K$}
\put(117,-33){$\bullet$}

\put(80,110){\line(1,0){260}}
\put(80,-30){\vector(1,0){290}}
\put(80,-30){\vector(0,1){170}}

\qbezier(130,-30)(140,30)(190,27) 
\qbezier(190,27)(220,24)(240,3)
\qbezier(240,3)(260,-16)(320,-20)
\qbezier[100](120,85)(124,45)(180,36)
\qbezier[150](180,36)(260,28)(330,25)

\put(230,48){$\pi=\frac{(\gamma-r)K}{\Delta x}+ \frac{r-b}{\Delta}$}
\put(228,45){\vector(-1,-1){10}}
\put(355,-42){$x$}
\put(68,-40){$0$}
\put(68,106){$1$}
\put(68,130){$\pi$}
\put(155,-20){$\pi=\Pi(x,t)$}
\put(152,-13){\vector(-1,1){10}}
\put(175,0){$\setr $}
\put(175,70){$\setc $}
\end{picture}\vspace{45pt}
\caption{In the plan $t$-section, $a>\gamma >r>b$.}
\label{fig:7}
\end{center}
\end{figure}
Economically speaking, one should not redeem the stock if the excess return rate of the stock is very likely to be $a$ (which is higher than the loan rate). 
\end{thm}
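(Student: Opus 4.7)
The first identity describing $\setc$ and $\setr$ in terms of $\Pi(x,t)$ is an immediate consequence of \citelem{regionbypi}, which was established for the general case, so nothing new is required for that statement. For the strict bound $\Pi(x,t)<\tfrac{\gamma-b}{\Delta}$ under the additional hypothesis $a>\gamma$, observe that in Case~3 we have $\gamma>r>b$, so combined with $a>\gamma$ we sit precisely in the configuration $a>\gamma>b$ of \citeremark{remarkcase2}. That remark already shows $\{(x,\pi,t)\in\seta\mid \tfrac{\gamma-b}{\Delta}\leq\pi<1\}\subseteq\setc$ via the direct computation in \eqref{53}, using $-\BL(x-K)\leq (r-\gamma)(x-K)<0$ for $x>K$ together with \eqref{regions}. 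By the maximality in the definition \eqref{pidef}, this forces the strict bound $\Pi(x,t)<\tfrac{\gamma-b}{\Delta}$.

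The substantive content of the theorem is the rise-then-fall shape of $x\mapsto\Pi(x,t)$ and the identification of its two branches with $X_1$ and $X_2$. My plan is to leverage \citelem{regionbyx1x2}: for each fixed $(\pi,t)$, the $x$-slice of $\setr$ is the interval $[X_1(\pi,t),X_2(\pi,t)]$ (possibly empty), with $X_1(\pi,t)$ non-decreasing in $\pi$ and $X_2(\pi,t)$ non-increasing in $\pi$. Parametrizing the boundary of $\setr$ in the $(x,\pi)$-slice at time $t$ by $\pi$, it splits into two arcs: the \emph{left} arc $\{(X_1(\pi,t),\pi)\}$, along which $x$ is non-decreasing in $\pi$ so that its graph-inverse $\Pi$ is non-decreasing in $x$; and the \emph{right} arc $\{(X_2(\pi,t),\pi)\}$, along which $x$ is non-increasing in $\pi$ so that $\Pi$ is non-increasing in $x$. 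The two arcs meet at the peak value $\hat\pi(t):=\sup\{\pi\in(0,1)\mid X_1(\pi,t)\leq X_2(\pi,t)\}$, above which the slice is empty and $\Pi(\cdot,t)=0$ by the convention in \eqref{pidef}.

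Putting these ingredients together gives the theorem: the non-decreasing branch of $\pi=\Pi(x,t)$ coincides with the inverse of $X_1$, hence is expressed as $x=X_1(\pi,t)$, and the non-increasing branch coincides with $x=X_2(\pi,t)$; when $a>\gamma$ the peak itself stays strictly below $\tfrac{\gamma-b}{\Delta}$. I expect the main difficulty not to lie in any single inequality but rather in the bookkeeping needed to match the inverses of $X_1$ and $X_2$ to the appropriate branches of $\Pi$ and to verify that the two arcs join continuously into a single-peaked curve. The fact that the redeeming slice $[X_1(\pi,t),X_2(\pi,t)]$ is an honest interval (and does not split into disjoint components) is built into \citelem{regionbyx1x2} via the convexity in $x$ of $u$ from \citelem{umono}, so the branch-matching step is purely combinatorial once the monotonicities of $X_1$ and $X_2$ in $\pi$ are in hand.
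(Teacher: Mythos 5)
Your proposal is correct and follows essentially the same route as the paper: the region descriptions come from \citelem{regionbypi}, the single-peaked shape of $\Pi(\cdot,t)$ with its two branches identified as $x=X_1(\pi,t)$ and $x=X_2(\pi,t)$ comes from the interval structure and monotonicities in \citelem{regionbyx1x2}, and the bound $\Pi(x,t)<\tfrac{\gamma-b}{\Delta}$ for $a>\gamma$ is exactly the containment of \citeremark{remarkcase2}. The paper's own argument is precisely this summary (it only adds the finer containment $\big\{\tfrac{(\gamma-r)K}{\Delta x}+\tfrac{r-b}{\Delta}<\pi<\tfrac{\gamma-b}{\Delta}\big\}\subseteq\setc$, which serves the figure rather than the stated claims).
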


\subsection{Case 4: $\gamma>r=b$}
\noindent
In this case, we have
\begin{thm}
If $\gamma >r=b$, then
\begin{align*}
\setc &=\{(x,\pi,t)\in\seta\mid \pi>\Pi(x,t)\},\\
\setr &=\{(x,\pi,t)\in\seta\mid \pi\leq \Pi(x,t)\},
\end{align*}
where the redeeming boundary $\Pi(x,t)$ is first non-decreasing and then non-increasing in $x$ with
\[\lim_{x\to+\infty}\Pi(x,t)=0.\]
Further, $\Pi(x,t)<\frac{\gamma-r}{\Delta}$ when $a>\gamma$. 
\end{thm}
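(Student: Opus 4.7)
The set-theoretic characterization of $\setc$ and $\setr$ in terms of $\Pi(x,t)$ is immediate from \citelem{regionbypi}. The three substantive claims about $\Pi$ all ride on the identity obtained by plugging $b=r$ into the operator $\BL$,
\[
-\BL(x-K)=(r-\Delta\pi-b)x+(\gamma-r)K=-\Delta\pi\,x+(\gamma-r)K,
\]
which measures whether the affine obstacle $x-K$ acts locally as a super- or a sub-solution of the variational inequality \eqref{pu}.

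The single technical lemma I would establish first is: at every $(x_0,\pi_0,t_0)\in\setr\cap\{x>K\}$ one has $-\BL(x-K)\big|_{(x_0,\pi_0,t_0)}\geq 0$. Since $u$ has only the weak regularity described in Section~3, this cannot be read pointwise from \eqref{pu}, so I would argue probabilistically. Suppose for contradiction that $-\BL(x-K)<0$ throughout a small open ball $B\subset\seta$ around $(x_0,\pi_0,t_0)$, chosen so that $x>K$ on $\overline{B}$. Let $\tau=\inf\{s\geq t_0 : (X_s,\pi_s,s)\notin B\}\wedge T$, a stopping time in $\sett_{[t_0,T]}$. Applying It\^o's formula along \eqref{pro:X} to the smooth function $(x,s)\mapsto e^{(\gamma-r)(s-t_0)}(x-K)$ and taking expectations (the stochastic integral is a true martingale up to $\tau$ because its integrand is bounded on $B$) gives
\[
\BE{e^{(\gamma-r)(\tau-t_0)}(X_\tau-K)\,\big|\,X_{t_0}=x_0,\pi_{t_0}=\pi_0}=x_0-K+\BE{\int_{t_0}^{\tau}e^{(\gamma-r)(s-t_0)}\BL(x-K)\big|_{(X_s,\pi_s,s)}\ds}>x_0-K,
\]
since $\BL(x-K)>0$ on $B$. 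Because $X_\tau>K$ on $\overline{B}$, the left-hand side equals $\BE{e^{(\gamma-r)(\tau-t_0)}(X_\tau-K)^+\mid X_{t_0}=x_0,\pi_{t_0}=\pi_0}$, which is bounded above by $u(x_0,\pi_0,t_0)=(x_0-K)^+$ via \eqref{main00}, a contradiction. This local-waiting argument is the main obstacle of the proof; everything else is a corollary.

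Granted the lemma, membership $(x,\pi,t)\in\setr\cap\{x>K\}$ forces $\Delta\pi x\leq (\gamma-r)K$, hence
\[
\Pi(x,t)\leq \frac{(\gamma-r)K}{\Delta x},
\]
which vanishes as $x\to+\infty$. For $x>K$ the right-hand side is already strictly less than $\frac{\gamma-r}{\Delta}$, and for $x\leq K$ we have $\Pi(x,t)=0$ by \eqref{regions}. Combining the two yields $\Pi(x,t)<\frac{\gamma-r}{\Delta}$ uniformly in $x$, a bound which is informative precisely when $\frac{\gamma-r}{\Delta}<1$, i.e.\ when $a>\gamma$.

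Finally, the ``first non-decreasing then non-increasing'' profile of $x\mapsto \Pi(x,t)$ is extracted from \citelem{regionbyx1x2}: at fixed $t$, $\setr$ coincides with $\{X_1(\pi,t)\leq x\leq X_2(\pi,t)\}$ with $X_1$ non-decreasing and $X_2$ non-increasing in $\pi$. Setting $\pi^\ast(t)=\sup\{\pi : X_1(\pi,t)\leq X_2(\pi,t)\}$ and $x^\ast(t)=X_1(\pi^\ast(t),t)=X_2(\pi^\ast(t),t)$, inversion of the two branches gives $\Pi(x,t)=X_1^{-1}(x,t)$ (non-decreasing) on $[X_1(0,t),x^\ast(t)]$ and $\Pi(x,t)=X_2^{-1}(x,t)$ (non-increasing) on $[x^\ast(t),X_2(0,t)]$, with $\Pi\equiv 0$ outside this range. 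The behaviour at the degenerate boundary $\pi=0$ needed to make sense of $X_1(0,t)$ and $X_2(0,t)$ is supplied by \citeprop{pi=0;r=b-delta}.
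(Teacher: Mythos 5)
Your second half is essentially the paper's own argument: the bound $\Pi(x,t)\le\frac{(\gamma-r)K}{\Delta x}$ extracted from the sign of $-\BL(x-K)=-\Delta\pi x+(\gamma-r)K$, the resulting limit and the bound $\Pi<\frac{\gamma-r}{\Delta}$ for $a>\gamma$, and the identification of the two monotone branches of $\Pi$ with $X_1$ and $X_2$ via \citelem{regionbyx1x2}. Your probabilistic ``local waiting'' lemma is in fact a gain in rigour over the paper, which repeatedly passes from $-\BL(x-K)<0$ to membership in $\setc$ without justification (the citation of \eqref{regions} there does not deliver this step); since $u$ is not known to solve \eqref{pu} pointwise, your It\^o/exit-time argument is the right way to make that inference honest.

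There is, however, a genuine omission: you never prove that $\Pi(x,t)>0$ for all sufficiently large $x$, which is the part of the theorem to which the paper devotes most of its proof and which Figures \ref{fig:8} and \ref{fig:9} depict. Your upper bound alone is consistent with $\Pi\equiv 0$, i.e.\ with $\setr=\emptyset$, in which case both the unimodality claim and $\lim_{x\to+\infty}\Pi(x,t)=0$ are vacuous; the substantive content of the statement is that the redeeming region is nonempty and unbounded in the $x$-direction at every time, with $\Pi$ positive but decaying. The paper proves this by contradiction: if $\Pi(x,t_0)=0$ for all $x>x_0$, then by \citeprop{pi=0;r=b-delta} one has $u(x,0,t_0)=x-K$ there while $u(x,\pi,t_0)>x-K$ for $\pi>0$; letting $\pi\to 0$ in the equation $\BL u=0$, which holds on the continuation region, and substituting $u_{xx}=0$, $u_x=1$, $u=x-K$ on $\{\pi=0,\,t=t_0\}$ together with $b=r$ yields $u_t(x,0,t_0)=(\gamma-r)K>0$, contradicting $u_t\le 0$. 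Some version of this boundary argument (or another lower bound on the redeeming region) must be added for your proof to establish what the theorem is actually asserting.
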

\begin{proof}
We first prove that $\Pi(x,t)$ is positive for all sufficiently large $x$.
Suppose this is not so, because $\Pi(x,t_0)$ is non-increasing for large $x$; then there is a point $ (x_0, 0, t_0)$ such that $\Pi(x,t_0)=0$ for all $x>x_0$.
From \citeprop{pi=0;r=b-delta} and the definition of $\Pi$ \eqref{pidef} we see that
\begin{align*}
u(x,0,t_0)&=x-K,\quad x> x_0; \\
u(x,\pi,t_0)&>x-K,\quad x> x_0,\;\pi>0.
\end{align*}
By $u_t\leq 0$, we have
\begin{align}
u(x,0,t)&=x-K,\quad x\geq x_0, \; t\geq t_{0};\label{u(x0t)}\\
u(x,\pi,t)&>x-K,\quad x\geq x_0,\;\pi>0,\;t \leq t_0.
\end{align}
Hence by \eqref{pu}
\begin{multline*}
u_t+\tfrac{1}{2} x^2u_{xx}+\tfrac{1}{2}
\Delta^2\pi^2(1-\pi)^2u_{\pi\pi}+\Delta\pi(1-\pi)x u_{x\pi}\\
\quad\quad+( \Delta\pi+b-\gamma)x u_x+(\gamma-r)u=0,\quad x\geq
x_0,\;\pi>0,\;t \leq t_0.
\end{multline*}
Letting $\pi$ go to 0,
\begin{align}\label{nearpi=0}
u_t+\tfrac{1}{2}x^2u_{xx}+( b-\gamma)x
u_x+(\gamma-r)u=0,\quad x\geq x_0,\;\pi=0,\;t \leq t_0.
\end{align}
By \eqref{u(x0t)}, we have for $x\geq x_0$, $\pi=0$, $t=t_0$,
\begin{align*}
u_t(x,0,t_0)\leq 0,\quad u_{xx}(x,0,t_0)=0,\quad u_{x}(x,0,t_0)=1,
\quad u(x,0,t_0)=x-K.
\end{align*}
Substituting them into the equation \eqref{nearpi=0} with $t=t_{0}$,
\[u_t(x,0,t_0)+( b-\gamma)x+(\gamma-r)(x-K)=0.\]
Applying $r=b$, we obtain $u_t(x,0,t_0)=(\gamma-r)K>0$, which is a
contradiction.
\par The non-decreasing part of $\pi=\Pi(x,t)$ corresponds to $x=X_1(\pi,t)$ in \eqref{x1def}, and the non-increasing part corresponds to $x=X_2(\pi,t)$ in \eqref{x2def}. 
\par
For $\pi> \frac{(\gamma-r)K}{\Delta x}$,
we have by \eqref{53},
\begin{align*}
-\BL (x-K)&=(r-(\Delta\pi+b ))x+(\gamma-r)K=-\Delta\pi x+(\gamma-r)K<0,
\end{align*}
hence by \eqref{regions}
\[\left\{(x,\pi,t)\in\seta \mid \pi>\tfrac{(\gamma-r)K}{\Delta x}\right\}\subseteq\setc. \]
This clearly implies $\lim_{x\to +\infty}\Pi(x,t)=0$ and $\Pi(x,t)<\frac{\gamma-r}{\Delta}$ when $a>\gamma$ as $\Pi(x,t)>K$.
\end{proof}

See Figures \ref{fig:8}, \ref{fig:9} for illustrations of our results when $a\leq \gamma$ and $a>\gamma$.
\begin{figure}[H]
\begin{center}
\begin{picture}(385,150)
\multiput(110,110)(0,-1){140}{\line(1,0){0.5}}
\put(104,-43){$K$}
\put(107,-33){$\bullet$}

\put(80,110){\line(1,0){205}}
\put(68,106){$1$}

\put(80,-30){\vector(1,0){220}}
\put(290,-42){$x$}
\put(68,-40){$0$}

\put(80,-30){\vector(0,1){170}}
\put(68,130){$\pi$}

\qbezier(120,-30)(130,20)(150,30)
\qbezier(150,30)(170,40)(185,10)
\qbezier(185,10)(200,-15)(220,-20)
\qbezier(220,-20)(233,-24)(260,-27)

\qbezier[160](160,110)(180,0)(260,-11)
\put(200,65){$\pi=\tfrac{(\gamma-r)K}{\Delta x}$}

\put(198,63){\vector(-1,-1){10}}
\put(135,-23){$\pi=\Pi(x,t)$}
\put(136,-15){\vector(-2,3){7}}
\put(150,10){$\setr $}
\put(150,60){$\setc $}
\end{picture}\vspace{45pt}
\caption{In the plan $t$-section, $\gamma\geq a >r=b$.}
\label{fig:8}
\end{center}
\end{figure}

\begin{figure}[H]
\begin{center}
\begin{picture}(385,150)
\multiput(110,85)(0,-1){115}{\line(1,0){0.5}}
\put(104,-43){$K$}
\put(107,-33){$\bullet$}

\put(80,110){\line(1,0){205}}
\put(68,106){$1$}

\put(80,-30){\vector(1,0){220}}
\put(290,-42){$x$}
\put(68,-40){$0$}

\put(80,-30){\vector(0,1){170}}
\put(68,130){$\pi$}

\multiput(80,85)(3,0){11}{\line(1,0){0.5}}
\put(48,82){$\frac{\gamma-r}{\Delta}$}

\qbezier(120,-30)(125,10)(144,20)
\qbezier(144,20)(160,30)(175,5)
\qbezier(175,5)(190,-15)(205,-20)
\qbezier(205,-20)(213,-24)(240,-27)

\qbezier[160](110,85)(120,30)(250,1)
\put(200,34){$\pi=\tfrac{(\gamma-r)K}{\Delta x}$}
\put(198,32){\vector(-1,-1){10}}

\put(135,-23){$\pi=\Pi(x,t)$}
\put(136,-15){\vector(-2,3){7}}
\put(150,0){$\setr $}
\put(150,65){$\setc $}
\end{picture}\vspace{45pt}
\caption{In the plan $t$-section, $a>\gamma >r=b$.}
\label{fig:9}
\end{center}
\end{figure}
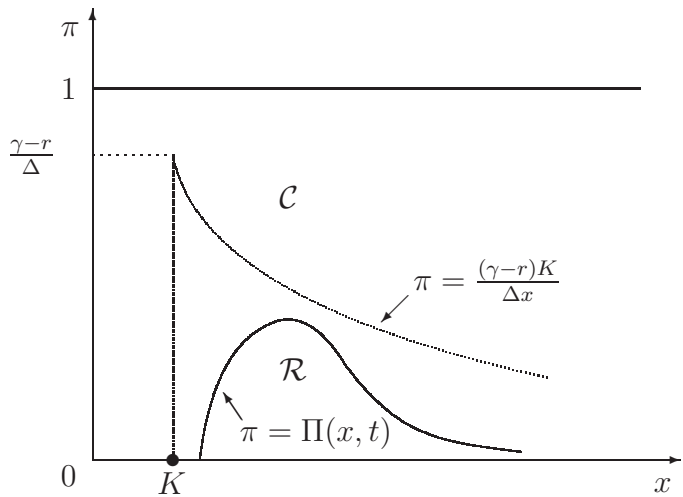

\section{Concluding remarks}
\noindent
Due to space limitations, this paper has studied only the optimal redeeming problem of the standard stock loan. Clearly, one can use our method to study stock loans with other ways of dividend distribution. Such problems, however, will involve higher dimensional processes and thus, a more involved analysis is required. We hope to investigate these areas in our future research endeavors. 
\par
Meanwhile, one can also consider the possible extensions of this study where the drift dependents on a finite state Markov chain. A similar technique should work using the Wonham filter.

\newpage




\begin{thebibliography}{}

\bibitem{BC09}\textsc{Bain, ~A., Crisan, ~D. (2009).}
Fundamentals of stochastic filtering,
\emph{Stochastic Modelling and Applied Probability,} 60. Springer, New York.

\bibitem{BJPZ2005}\textsc{Bielecki,~T.~R., Jin,~H., Pliska,~S.~R., and Zhou,~X.~Y. (2005).}
Continuous--time mean--variance portfolio selection with bankruptcy prohibition,
\emph{Math. Finance}, \textbf{15}, 213--244.

\bibitem{CS14}\textsc{Cai,~N., and Sun,~L. (2014).}
Valuation of stock loans with jump risk,
\emph{Journal of Economic Dynamics \& Control}, \textbf{40}, 213--241.


\bibitem{CYW11}\textsc{Chen,~X., Yi,~F. and Wang, ~L. (2011).}
American lookback option with fixed strike price --- 2D parabolic variational inequality,
\emph{Journal of Differential Equations}, \textbf{251}, 3063--3089.

\bibitem{DJZZ09}
\textsc{Dai,~M., Jin,~H, Zhong,~Y.~F., and Zhou,~X.~Y. (2009).}
Buy low and sell high,
\emph{Contemp. Quant. Financ.}, 317--333.


\bibitem{DX11}
\textsc{Dai, M., and Xu,~Z.~Q. (2011).}
Optimal redeeming strategy of stock loans with finite maturity,
\emph{Math. Finance}, \textbf{21}, 775--793.


\bibitem{DXZ09}
\textsc{Dai,~M., Xu,~Z.~Q., and Zhou,~X.~Y. (2009).}
Continuous--time Markowitz's model with transaction costs,
\emph{SIAM J. Finan. Math.}, \textbf{1}, 96--125.
	
\bibitem{DZZ10}
\textsc{Dai,~M., Zhang,~Q., and Zhu,~Q.~J. (2010).}
Trend Following Trading under a Regime Switching Model
\emph{SIAM J. Finan. Math.}, \textbf{1}, 780--810.

\bibitem{DZ12}
\textsc{Dai, M., and Zhong,~Y.~F. (2012).}
Optimal stock selling/buying strategy with reference to the ultimate average,
\emph{Math. Finance}, \textbf{22}, 165--184.

\bibitem{DZZ12}
\textsc{Dai,~M., Zhou,~Y., and Zhong,~Y.~F. (2012).}
Optimal stock selling based on the global maximum,
\emph{SIAM J. Control Optim.}, \textbf{50}(4), 1804--1822.



\bibitem{DTX12}
\textsc{Detemple,~J., Tian,~W., and Xiong,~J. (2012).}
Optimal stopping with reward constraints,
\emph{Finance Stochas.}, \textbf{16}, 423--448.

\bibitem{DMV05}
\textsc{D\'{e}camps,~J.-P., Mariotti,~T., and Villeneuve, ~S (2005).}
Investment timing under incomplete information,
\emph{Mathematics of Operations Research}, \textbf{30}, 472--500.


\bibitem{DR}
\textsc{Duffie,~D, and Richardson,~H. (1991).}
Mean--variance hedging in continuous time,
\emph{Ann. Appl. Probab.}, \textbf{14}, 1--15.



\bibitem{EV15}
\textsc{Ekstr, ~E., and Vaicenavicius,~J. (2015).}
Optimal liquidation of an asset under drift uncertainty,
\emph{SIAM Journal on Financial Mathematics},\textbf{7}, 357--381.

\bibitem{F82}\textsc{Friedman, ~A. (1982).}
Variational principles and free boundary problems,
\emph{John Wiley Sons}, New York.



\bibitem{G01a}
\textsc{Guo, X. (2001).}
Inside information and option pricings,
\emph{Quant. Financ.}, \textbf{1}, 38--44.

\bibitem{G01b}
\textsc{Guo, X. (2001).}
An explicit solution to an optimal stopping problem with regime switching,
\emph{J. App. Prob.}, \textbf{38}, 464--481.


\bibitem{HX16}
\textsc{Hou, D. and Xu,~Z.~Q. (2016).}
A robust Markowitz mean--variance portfolio selection model with an intractable claim,
\emph{SIAM J. Finan. Math.}, \textbf{7}(1), 124--151.

\bibitem{HZ05}
\textsc{Hu,~Y., and Zhou,~X.~Y. (2005).}
Constrained stochastic LQ control with random coefficients, and application to mean--variance portfolio selection,
\emph{SIAM J. Control Optim.}, \textbf{44}, 444--466.



\bibitem{JZ15} \textsc{Jin, H., and Zhou, X.~Y. (2015).}
Continuous--time portfolio selection under ambiguity,
\emph{Math. Control Relat. Fields}, \textbf{5}(3), 475--488.

\bibitem{KX94}
\textsc{Kallianpur,~G., and Xiong,~J. (1994).}
Stochastic models of environmental pollution,
\emph{Adv. Appl. Probab.}, \textbf{26}, 377--403.

\bibitem{KX01}
\textsc{Kallianpur,~G., and Xiong,~J. (2001).}
Asset pricing with stochastic volatility,
\emph{Appl. Math. Optim.}, \textbf{43}(1), 47--62.

\bibitem{KaratzasShreve1999}\textsc{Karatzas,~I., and Shreve,~S.~E. (1999).}
\emph{Methods of Mathematical Finance}, Springer--Verlag, New York.



\bibitem{LNZ12}\textsc{Liang, ~Z., Nie, ~Y., and Zhao, ~L. (2012).}
Valuation of infinite maturity stock loans with geometric Levy model in a risk-neutral framework,
\emph{working paper}.


\bibitem{LZ13}\textsc{Lamberton, D., and Zervos, M. (2013).}
On the optimal stopping of a one-dimensional diffusion,
\emph{Electron. J. Probab.}, \textbf{18}, 34--49.

\bibitem{LLW15}\textsc{Leung, T., Li,~X., and Wang,~Z. (2015).}
Optimal multiple trading times under the exponential OU model with transaction costs,
\emph{J. Stoch. Models}, \textbf{31}, 554--587.
%

\bibitem{LTWW13}\textsc{Li,~X., Tan, H.H., Wilson, C., and Wu,~Z.~Y. (2013).}
When should venture capitalists exit their investee companies?
\emph{Int. J. Manag. Finance}, \textbf{9}, 351--364.

\bibitem{LX16}
\textsc{Li,~X., and Xu,~Z.~Q. (2016).}
Continuous--time Markowitz's model with constraints on wealth and portfolio,
\emph{Oper. Res. Lett.}, \textbf{44}, 729--736.

\bibitem{LZL02}
\textsc{Li,~X, Zhou,~X.~Y., and Lim,~A.~E.~B. (2002).}
Dynamic mean--variance portfolio selection with no--shorting constraints,
\emph{SIAM J. Control Optim.}, \textbf{40}, 1540--1555.

\bibitem{LS2001a}\textsc{Liptser,~R., and Shiryaev,~A.~N. (2001).}
\emph{Statistics of Random Processes, I. General Theory}, Springer--Verlag, Berlin, Heidelberg.
%
\bibitem{Markowitz1952} \textsc{Markowitz,~H. (1952).}
Portfolio selection,
\emph{J. Financ.}, \textbf{7}, 77--91.

\bibitem{Markowitz1959} \textsc{Markowitz,~H. (1959).}
\emph{Portfolio Selection: Efficient Diversification of Investments}, New York: John Wiley \& Sons.

\bibitem{MbOZ12}\textsc{Meyer--Brandis, T., {\O}ksendal, B., and Zhou, X.~Y. (2012).}
A mean--field stochastic maximum principle via Malliavin calculus,
\emph{Stochastics}, \textbf{84}, 643--666.



\bibitem{OR73}\textsc{Oleinik, ~O.~A., and Radkevie, ~E.~V. (1973).}
Second order equations with nonnegative characteristic form.
\emph{American Mathematical Society, Rhode Island and Plenum Press,} New York.

\bibitem{PS07}\textsc{Petrosyan,~A., and Shahgholian,~H. (2007).}
Parabolic obstacle problems applied to finance,
\emph{Recent developments in nonlinear partial differential equations}, pp. 117--133, Contemp. Math. 439 AMS
Providence, RI, 2007.


\bibitem{P09}\textsc{Pham,~H. (2009).}
Continuous-time stochastic control and optimization with financial applications,
\emph{Stochastic Modelling and Applied Probability,} 61.
Springer-Verlag, Berlin Heidelberg.


\bibitem{PZ10}\textsc{Prager,~D., and Zhang, Q. (2010).}
Stock loan valuation under a regime-switching model with mean-reverting and finite maturity,
\emph{J. Syst. Sci. Complex}, \textbf{23}(3), 572--583.


\bibitem{S92} \textsc{Schweizer,~M. (1992).}
Mean--variance hedging for general claims,
\emph{Ann. Appl. Probab.}, \textbf{2}, 171--179.

\bibitem{S08} \textsc{Shiryaev,~A.~N. (2008).}
\emph{Optimal Stopping Rules}, Springer-Verlag, Berlin.

\bibitem{S09} \textsc{Shiryaev,~A.~N. (2009).}
On stochastic models and optimal methods in the quickest detection problems,
\emph{Theory Probab. Appl.}, \textbf{53}, 385--401.

\bibitem{SN08} \textsc{Shiryaev,~A.~N., Novikov, A. A. (2008).}
On a stochastic version of the trading rule ``buy and hold'',
\emph{Statist. Decisions}, \textbf{26}, 289--302.


\bibitem{SXZ08} \textsc{Shiryaev,~A.~N., Xu,~Z.~Q., and Zhou,~X.~Y. (2008).}
Thou shalt buy and hold,
\emph{Quant. Financ.}, \textbf{8}, 765--776.


\bibitem{SZZ14}
\textsc{Shiryaev, A. N., Zhitlukhin, and M. V., Ziemba, W. T. (2014).}
When to sell Apple and the NASDAQ? Trading bubbles with a stochastic disorder model,
\emph{J. Portf. Manage.}, \textbf{40}, 54--63.

\bibitem{SYZ09}
\textsc{Song, Q. S., Yin, G., and Zhang, Q. (2009).}
Stochastic optimization methods for buying-low-and-selling-high strategies,
\emph{Stoch. Anal. Appl.}, \textbf{27}, 523--542.


\bibitem{V17}
\textsc{Vannest\aa l,~M. (2017).}
Optimal timing decisions in financial markets,
\emph{Doctoral Thesis}, Uppsala University, 2017.


\bibitem{WWX13b}
\textsc{Wang,~G.~C., Wu,~Z., and Xiong,~J. (2013).}
A linear-quadratic optimal control problem of forward-backward stochastic differential equations with partial information
\emph{IEEE T. Automat. Contr.}, \textbf{60}(11), 2904--2916.

\bibitem{XZ07} \textsc{Xia,~J., and Zhou,~X.~Y. (2007).}
Stock loans,
\emph{Math. Finance}, \textbf{17}(2), 307--317.

\bibitem{X01} \textsc{Xiong,~J. (2001).}
Particle approximations to the filtering problem in continuous time,
\emph{The Oxford Handbook of Nonlinear Filtering}, Oxford University Press.

\bibitem{X08} \textsc{Xiong,~J. (2008).}
\emph{An Introduction to Stochastic Filtering Theory}, Oxford University Press.

\bibitem{XiongZhou2007} \textsc{Xiong,~J., and Zhou,~X.~Y. (2007).}
Mean--variance portfolio selection under partial information,
\emph{SIAM J. Control Optim.}, \textbf{46}, 156--175.

\bibitem{Xu16}\textsc{Xu,~Z.~Q. (2016).}
A note on the quantile formulation,
\emph{Math. Finance}, \textbf{26}(3), 589--601.


\bibitem{XZ13} \textsc{Xu,~Z.~Q., and Zhou,~X.~Y. (2013).}
Optimal stopping under probability distortion,
\emph{Ann. Appl. Probab.}, \textbf{23}, 251--282.

\bibitem{Y13}\textsc{Yong,~J.~M. (2013).}
Linear--quadratic optimal control problems for mean--field stochastic differential equations,
\emph{SIAM J. Control Optim.}, \textbf{51}, 2809--2838.

\bibitem{YZ99}\textsc{Yong,~J.~M., and Zhou,~X.~Y. (1999).}
\emph{Stochastic controls Hamiltonian systems and HJB equations,}
Springer-Verlag New York.



\bibitem{ZJA12}
\textsc{Zervos, M., Johnson, T., and Alazemi, F. (2012).}
Buy-low and sell-high investment strategies,
\emph{Math. Finance}, \textbf{23}(3), 560--578.



\bibitem{Z01}
\textsc{Zhang, Q. (2001).}
Stock trading: an optimal selling rule,
\emph{SIAM J. Contr. Optim.}, \textbf{40}, 64--87.

\bibitem{ZZ00}
\textsc{Zhang, Q., and Zhou,~X.~Y. (2009).}
Valuation of stock loans with regime switching,
\emph{SIAM J. Control Optim.}, \textbf{48}, 1229--1250.


\bibitem{ZS13}
\textsc{Zhitlukhin, M. V., and Shiryaev, A. N. (2013).}
Baeyes disorder problems on filtered probability spaces,
\emph{Theory Probab. Appl.}, \textbf{57}, 497--511.

\bibitem{ZS14}
\textsc{Zhitlukhin, M. V., and Shiryaev, A. N. (2014).}
Optimal stopping problems for a Brownian motion with disorder on a segment,
\emph{Theory Probab. Appl.}, \textbf{58}, 164--171.

\bibitem{ZL00}\textsc{Zhou,~X.~Y., and Li,~D. (2000).}
Continuous time mean--variance portfolio selection: a stochastic LQ framework,
\emph{Appl. Math. Optim.}, \textbf{42}, 19--33.



\end{thebibliography}
\end{document}